\crefname{figure}{Fig.}{Fig.}
\newtheorem{theorem}{Theorem}
\newtheorem{lemma}{Lemma}
\theoremstyle{definition}
\newtheorem{definition}{Definition}
\theoremstyle{theorem}
\theoremstyle{definition}
\begin{document}

\title{Exact and efficient Lanczos method on a quantum computer}

\author{William Kirby}
\email{william.kirby@ibm.com}
\affiliation{
IBM Quantum, IBM Research Cambridge, Cambridge, MA 02142, USA
}
\affiliation{
Department of Physics and Astronomy, Tufts University, Medford, MA 02155, USA
}

\author{Mario Motta}
\affiliation{
IBM Quantum, IBM Research Almaden, San Jose, CA 95120, USA
}

\author{Antonio Mezzacapo}
\affiliation{
IBM Quantum, T. J. Watson Research Center, Yorktown Heights, NY 10598, USA
}

\begin{abstract}

We present an algorithm that uses block encoding on a quantum computer to exactly construct a Krylov space, which can be used as the basis for the Lanczos method to estimate extremal eigenvalues of Hamiltonians.
While the classical Lanczos method has exponential cost in the system size to represent the Krylov states for quantum systems, our efficient quantum algorithm achieves this in polynomial time and memory.
The construction presented is exact in the sense that the resulting Krylov space is identical to that of the Lanczos method, so the only approximation with respect to the exact method is due to finite sample noise.
This is possible because, unlike previous quantum Krylov methods, our algorithm does not require simulating real or imaginary time evolution.
We provide an explicit error bound for the resulting ground state energy estimate in the presence of noise.
For our method to be successful efficiently, the only requirement on the input problem is that the overlap of the initial state with the true ground state must be $\Omega(1/\text{poly}(n))$ for $n$ qubits.

\end{abstract}

\maketitle

\section{Introduction}

Estimating the ground state energy of a quantum system has broad applications across physics and chemistry.
In the most challenging cases, classically accomplishing this is believed to be exponentially hard in the system size, due to the cost of classically representing vectors in the Hilbert space~\cite{aspuruguzik2005molecularenergies,peruzzo2014vqe,omalley2016scalable,kandala2017hardwareefficient,mcardle2020quantumcomputational,su2021firstquantization}.
Many quantum algorithms for the task have been proposed (see for example~\cite{aspuruguzik2005molecularenergies,peruzzo2014vqe,omalley2016scalable,kandala2017hardwareefficient,mcardle2020quantumcomputational,su2021firstquantization,kitaev1995phaseestimation,lloyd1996quantumsimulators,farhi2000adiabatic,somma2002physicalphenomena,whitfield2011quantumsimulation,babbush2018lowdepth,mcclean2017subspace,colless2018computation,huggins2020nonorthogonal,motta2020qite_qlanczos,parrish2019filterdiagonalization,stair2020krylov,takeshita2020subspace,lin2020nearoptimalground,cohn2021filterdiagonalization,yoshioka2021virtualsubspace,klymko2022realtime,seki2021powermethod,cortes2022krylov,baek2022nonorthogonal,tkachenko2022davidson,lin2022heisenberglimited,dong2022groundstate}) that avoid this representation problem by encoding the system's Hilbert space in the Hilbert space of a register of qubits, whose dimension also grows exponentially.

The problem of finding ground state energies of local Hamiltonians is QMA-hard~\cite{kitaev02,kempe2006local}.
This means that any sufficiently general quantum algorithms for approximating ground state energies must require additional assumptions.
Some typical additional assumptions are finding an initial guess state that has $\Omega(1/\text{poly}(n))$ overlap with sufficiently low-lying energy states (for $n$ qubits)~\cite{kitaev1995phaseestimation,lin2020nearoptimalground,lin2022heisenberglimited,dong2022groundstate}, finding a gapped adiabatic path from an efficiently solvable Hamiltonian to the target Hamiltonian~\cite{farhi2000adiabatic}, optimizing a variational quantum circuit~\cite{peruzzo2014vqe}, or assuming bounded correlation length in the ground state~\cite{motta2020qite_qlanczos}.

One design principle for quantum algorithms is to reduce the overhead of classical numerical techniques.
This can be applied to classical algorithms that target ground states.
A well-known classical technique for finding extremal eigenvalues is the \emph{Lanczos method}~\cite{lanczos1950iteration,cullum2002lanczos}, which constructs a linear \emph{Krylov space} as the span of states obtained by applying powers of the Hamiltonian to some reference state.
The Hamiltonian is then diagonalized within the Krylov space, obtaining a variational estimate of the ground state energy.
This method has been widely studied and applied, and for a given problem instance it converges exponentially quickly in the dimension of the Krylov space, in practice often reaching sufficient accuracy within only some tens of powers of the Hamiltonian~\cite{kaniel1966linearalgebra,paige1971computation,saad1980lanczos,epperly2021subspacediagonalization}.
However, for quantum systems, the Krylov basis vectors have exponential dimension in the system size, which makes the classical Lanczos method exponentially costly in general.

A natural question is therefore whether a quantum version of the algorithm can make this cost polynomial.
Applying a power of the Hamiltonian operator to a reference state is a nonunitary operation.
Hence, some quantum versions of the Lanczos method have been proposed that replace powers of the Hamiltonian with unitary operators that approximate real~\cite{parrish2019filterdiagonalization,stair2020krylov,cohn2021filterdiagonalization,cortes2022krylov,klymko2022realtime} or imaginary~\cite{motta2020qite_qlanczos} time evolutions generated by the Hamiltonian.
Another strategy approximates powers of the Hamiltonian itself using linear combinations of time evolutions~\cite{seki2021powermethod}.
These methods avoid the exponential cost of representing quantum states, which have dimension $2^n$ for $n$ qubits.
However, they only approximate the original Lanczos method due to the modification of the Krylov basis vectors.
Furthermore, real and imaginary time evolutions can only be approximated on a digital quantum computer.

In this paper, we propose a quantum algorithm that exactly reproduces the Krylov space used in the classical Lanczos method up to finite sample noise, without the exponential classical representation cost, thus avoiding the aforementioned approximation errors.
Unlike in typical classical implementations, our method does not involve iteratively orthogonalizing the Krylov vectors, but it does yield the same Krylov space.
The algorithm we propose is based on the observation that the simplest case of qubitization presented in~\cite{low2019qubitization} applies to a Hamiltonian expressed as a linear combination of Pauli operators and produces block encodings of Chebyshev polynomials of the Hamiltonian. 
It does not require quantum signal processing~\cite{low2017quantumsignalprocessing}, which would be required to approximate time evolution using qubitization.
Instead, we show that applications of the block-encoding unitaries followed by measurement, repeated in order to estimate a collection of expectation values, provide enough information for the classical computer to do the rest.

We also provide an explicit analysis of the ground state energy error from our algorithm in the presence of noise.
The resulting Krylov space dimension (which is proportional to the maximum circuit depth) required to achieve a given target error scales either as the inverse of the target error or of the spectral gap, whichever is more favorable, times logarithmic factors: see \eqref{krylov_dimension_required}.
This means that our method can tolerate a vanishing spectral gap, but also that when the spectral gap is nonzero, the total number of measurements required to achieve a given target error scales asymptotically as the inverse-squared error (times logarithmic factors): see \eqref{measurements_lower_bound}.
This is the scaling one would expect in an algorithm that is based on repeated sampling.

Compared to prior work, if we first compare to previous quantum subspace diagonalization methods~\cite{motta2020qite_qlanczos,parrish2019filterdiagonalization,stair2020krylov,cohn2021filterdiagonalization,seki2021powermethod,cortes2022krylov,klymko2022realtime}, unlike all of these our algorithm is exactly equivalent to the classical Lanczos method up to sampling noise, and does not require approximate real or imaginary time evolution.
Unlike imaginary time-evolution, our algorithm does not require finite correlation length~\cite{motta2020qite_qlanczos}.
Compared to adiabatic state preparation~\cite{farhi2000adiabatic}, our algorithm does not depend on the spectral gap unless that is larger than the target error, and unlike variational quantum eigensolvers~\cite{peruzzo2014vqe}, it does not require optimizing a parameterized quantum circuit.
Finally, compared to quantum phase estimation, our algorithm requires a number of shots that is inverse quadratic in the target error rather than merely inverse (which is preferable).
However, it has shorter circuits if the phase estimation algorithm is also based on block encoding, and our method is robust to noise.
Details of all of these comparisons are given in \cref{conclusion}.

The paper is organized as follows. In \cref{background} we give requisite background on block encoding, qubitization, and quantum subspace diagonalization.
In \cref{method}, we describe and analyze our quantum algorithm for producing a Krylov space.
In \cref{error_analysis}, we describe the classical post-processing (regularization) required for our method, provide an error bound for the resulting ground state energy estimate in the presence of noise, and discuss the resulting scalings of Krylov space dimension and measurements.
In \cref{numerics}, we provide some numerical demonstrations of our method.
Finally, we discuss our results and conclude in \cref{conclusion}.

\section{Background}
\label{background}

\subsection{Block encoding}
\label{block_encoding_sec}

\begin{definition}[block encoding~\cite{low2019qubitization}]
\label{block_encoding}
    A \emph{block encoding} of a Hamiltonian $H$ (acting on a Hilbert space $\mathcal{H}_s$ whose states are denoted $|\cdot\rangle_s$) is a pair $(U,G)$, where $U$ is a unitary acting on $\mathcal{H}_a\otimes\mathcal{H}_s$ (for some auxiliary Hilbert space $\mathcal{H}_a$ whose states are denoted $|\cdot\rangle_a$), and $|G\rangle_a\coloneqq G|0\rangle_a$ is a state of the auxiliary qubits such that
    \begin{equation}
    \label{block_encoding_def}
        (\langle G|_a\otimes\mathds{1}_s)U(|G\rangle_a\otimes\mathds{1}_s)=H.
    \end{equation}
    Here $\mathds{1}_s$ denotes identity acting on $\mathcal{H}_s$.
\end{definition}
\noindent

As an example of block encoding, consider an $n$-qubit Hamiltonian expressed as a linear combination of Pauli operators:
\begin{equation}
\label{pauli_hamiltonian}
    H=\sum_{i=0}^{T-1}\alpha_iP_i,
\end{equation}
where $\alpha_i$ are real coefficients, $P_i$ are Pauli operators, and the number of terms is ${T=O(\text{poly}(n))}$.
We assume that the coefficients $\alpha_i$ are nonnegative, and instead each Pauli $P_i$ carries a $\pm1$ sign.
We also require a normalization condition on $H$: ${\sum_{i=0}^{T-1}|\alpha_i|=\sum_{i=0}^{T-1}\alpha_i=1}$.
Given some non-normalized input Hamiltonian whose coefficients are $\alpha_i^{(E)}$ (units of energy), we obtain normalized coefficients as ${\alpha_i=|\alpha_i^{(E)}|/\sum_{i=0}^{T-1}|\alpha_i^{(E)}|}$.

The block encoding unitary is
\begin{equation}
\label{unitary}
    U=\sum_{i=0}^{T-1}|i\rangle_a\langle i|_a\otimes P_i,
\end{equation}
i.e., application of each Pauli term $P_i$ controlled on the state of the auxiliary register being $|i\rangle_a$.
The corresponding block encoding state is
\begin{equation}
\label{G_a}
    |G\rangle_a=\sum_{i=0}^{T-1}\sqrt{\alpha_i}|i\rangle_a.
\end{equation}
Inserting $U$ as defined in \eqref{unitary} and $|G\rangle_a$ as defined in \eqref{G_a} into the left-hand side of \eqref{block_encoding_def} and using the fact that $\alpha_i\ge0$ for all $i$, one can verify that \eqref{block_encoding_def} is satisfied, i.e., this $(U,G)$ forms a valid block encoding of $H$.

One option for encoding $|i\rangle_a$ is to use $\lceil\log_2 T\rceil$ auxiliary qubits and let each $|i\rangle_a$ be the computational basis state corresponding to the binary number $i$~\cite{low2019qubitization}.
In this case, to implement $U$ we must, for each of the $T$ $P_i$'s, apply $P_i^{(j)}$ (the $j$th single-qubit Pauli operator in $P_i$) to system qubit $j$, controlled on the auxiliary qubits being in state $|i\rangle_a$.
$P_i$ is an $n$-qubit Pauli operator, so implementing $U$ requires applying at most $nT$ single-qubit Pauli operators, each controlled on all of the auxiliary qubits.

To prepare $|G\rangle_a=G|0\rangle_a$, we can use existing state preparation procedures, since there are only logarithmically-many auxiliary qubits and thus preparing an arbitrary state on them is efficient.
For example, one can use the state preparation based on binary data structures of \cite[Theorem A.1]{kerenidis2017recommendationsystems}.
To prepare an arbitrary real-amplitude state of $\lceil\log_2T\rceil$ qubits using this method requires fewer than $2T$ single-qubit rotations, each of which is controlled on up to all of the remaining qubits.

The qubitization iteration step that we will discuss below in \cref{qubitization_sec} also requires implementing $R$, a reflection around $|G\rangle_a$.
One can do this by applying $G^\dagger$, the inverse of the state preparation unitary, then applying the $\lceil\log_2T\rceil$-controlled phase that reflects around $|0\rangle_a$, and finally reapplying $G$.
Hence the total cost for $R$ is at most $4T$ $\lceil\log_2T\rceil$-controlled gates.
Methods for implementing the block encoding itself are not the main focus of this paper, but in \cref{block_encoding_app} we present an alternative encoding based on the binary representation of the Pauli operators, which uses more qubits in exchange for shorter circuits and is well-adapted to local Pauli Hamiltonians such as spin models.
When this block encoding is applied to a Heisenberg model containing arbitrary $XX$ and $ZZ$ interactions, $U$ requires $3n+T$ two-qubit gates, $G$ requires $4n^2-10n$ two-qubit gates plus $2$ single-qubit gates, and $R$ requires $8n^2+14$ two-qubit gates plus $4$ single-qubit gates.
Various other block encodings have been proposed that may be advantageous depending on the Hamiltonian and the available quantum device~\cite{low2019qubitization,steudtner2020blockencoding,camps2022blockencodings,camps2022fable}.

\subsection{Qubitization}
\label{qubitization_sec}

The other main component we will need is the simplest version of qubitization introduced in~\cite{low2019qubitization}, which applies when the block encoding unitary is self-inverse, i.e.,
\begin{equation}
\label{self_inv}
    U^2=\mathds{1}.
\end{equation}
In the example above, $U$ is the product of controlled Pauli operators given by \eqref{unitary}, so \eqref{self_inv} is satisfied.
The following lemma is based directly on Section IV in~\cite{low2019qubitization}; for completeness, we give its proof in \cref{proofs}. 

\begin{lemma}[Chebyshev polynomials from block encoding]
\label{chebyshev_lemma}
    Given a block encoding $(U,G)$ of a Hamiltonian $H$, such that $U^2=\mathds{1}$, let
    \begin{equation}
    \label{R_def}
        R\coloneqq (2|G\rangle_a\langle G|_a-\mathds{1}_a)\otimes\mathds{1}_s
    \end{equation}
    be the reflection around $|G\rangle_a$ in the auxiliary space.
    Then
    \begin{equation}
    \label{block_encoding_chebyshev}
        (\langle G|_a\otimes\mathds{1}_s)(RU)^k(|G\rangle_a\otimes\mathds{1}_s)=T_k(H)
    \end{equation}
    for any $k=0,1,2,...$, where $T_k(\cdot)$ is the $k$th Chebyshev polynomial of the first kind.
    In other words, $(RU)^k$ is a block encoding of $T_k(H)$.
\end{lemma}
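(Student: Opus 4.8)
The plan is to exploit the fact that, since $U$ is unitary and satisfies $U^2=\mathds{1}$, it is a Hermitian involution, i.e.\ a reflection, and $R$ is manifestly a reflection as well. The product of two reflections is a rotation, so the strategy is to decompose the full space into two-dimensional subspaces, each invariant under both $R$ and $U$ (hence under $RU$), on which $RU$ acts as a planar rotation whose angle encodes an eigenvalue of $H$. Diagonalizing $H$ and indexing these subspaces by its eigenpairs then reduces the operator identity \eqref{block_encoding_chebyshev} to the scalar definition $T_k(\cos\theta)=\cos(k\theta)$ of the Chebyshev polynomials.

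Concretely, for each normalized eigenvector $|\lambda\rangle_s$ of $H$ with $H|\lambda\rangle_s=\lambda|\lambda\rangle_s$ (note $\lambda\in[-1,1]$ by the normalization of $H$), I would set $|\psi_\lambda\rangle\coloneqq|G\rangle_a\otimes|\lambda\rangle_s$. The block-encoding condition \eqref{block_encoding_def} gives $\langle\psi_\lambda|U|\psi_\lambda\rangle=\langle\lambda|_sH|\lambda\rangle_s=\lambda$, and more strongly, projecting with $\Pi\coloneqq|G\rangle_a\langle G|_a\otimes\mathds{1}_s$ yields $\Pi U|\psi_\lambda\rangle=\lambda|\psi_\lambda\rangle$. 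Writing $U|\psi_\lambda\rangle=\lambda|\psi_\lambda\rangle+\sqrt{1-\lambda^2}\,|\psi_\lambda^\perp\rangle$ then defines a unit vector $|\psi_\lambda^\perp\rangle$ (for $\lambda^2\neq1$) that is orthogonal to $|\psi_\lambda\rangle$ and, crucially, satisfies $\Pi|\psi_\lambda^\perp\rangle=0$. This last fact is what makes $R$ act diagonally: $R|\psi_\lambda\rangle=|\psi_\lambda\rangle$ and $R|\psi_\lambda^\perp\rangle=-|\psi_\lambda^\perp\rangle$. To close the subspace under $U$ I would apply $U$ to the defining relation and use $U^2=\mathds{1}$, which forces $U|\psi_\lambda^\perp\rangle=\sqrt{1-\lambda^2}\,|\psi_\lambda\rangle-\lambda|\psi_\lambda^\perp\rangle$.

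In the ordered basis $\{|\psi_\lambda\rangle,|\psi_\lambda^\perp\rangle\}$ the operators therefore become
\begin{equation}
U=\begin{pmatrix}\lambda&\sqrt{1-\lambda^2}\\\sqrt{1-\lambda^2}&-\lambda\end{pmatrix},\qquad R=\begin{pmatrix}1&0\\0&-1\end{pmatrix},
\end{equation}
so that, writing $\lambda=\cos\theta_\lambda$, their product is the rotation
\begin{equation}
RU=\begin{pmatrix}\cos\theta_\lambda&\sin\theta_\lambda\\-\sin\theta_\lambda&\cos\theta_\lambda\end{pmatrix}.
\end{equation}
Hence $(RU)^k$ is rotation by $k\theta_\lambda$, and its top-left entry is $\langle\psi_\lambda|(RU)^k|\psi_\lambda\rangle=\cos(k\theta_\lambda)=\cos(k\arccos\lambda)=T_k(\lambda)$, which is exactly the diagonal matrix element of the left-hand side of \eqref{block_encoding_chebyshev} in the state $|\lambda\rangle_s$.

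To upgrade this to the claimed operator identity I would verify that the two-dimensional subspaces attached to distinct (orthonormal) eigenvectors are mutually orthogonal: the component along $|G\rangle_a$ is handled by orthogonality of the $|\lambda\rangle_s$, while orthogonality of the $|\psi_\lambda^\perp\rangle$ follows from $U(\mathds{1}-\Pi)U=\mathds{1}-U\Pi U$ together with $\langle\psi_{\lambda'}|U\Pi=\lambda'\langle\psi_{\lambda'}|$ and $\langle\psi_{\lambda'}|U|\psi_\lambda\rangle=\langle\lambda'|_sH|\lambda\rangle_s$. Then $(RU)^k$ maps each subspace into itself, so every matrix element $\langle\lambda'|_s(\langle G|_a\otimes\mathds{1}_s)(RU)^k(|G\rangle_a\otimes\mathds{1}_s)|\lambda\rangle_s$ equals $T_k(\lambda)\,\delta_{\lambda\lambda'}=\langle\lambda'|_sT_k(H)|\lambda\rangle_s$, and since this holds on a complete eigenbasis of $H$, \eqref{block_encoding_chebyshev} follows. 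The degenerate case, and the edge cases $\lambda=\pm1$ where the subspace collapses to a one-dimensional eigenline of $U$ and one checks $T_k(\pm1)=(\pm1)^k$ directly, require separate but routine bookkeeping. I expect the main obstacle to be precisely this assembly step: establishing that the family of rotation planes is orthogonal and complete, so that the blockwise rotation picture genuinely reconstitutes the operator $T_k(H)$ rather than only its expectation values.
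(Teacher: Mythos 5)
Your proposal is correct and takes essentially the same route as the paper's proof: restrict to the two-dimensional subspace spanned by $|G\rangle_a\otimes|\lambda\rangle$ and $|\perp_\lambda\rangle$ for each eigenvector $|\lambda\rangle$ of $H$, use the block-encoding condition and $U^2=\mathds{1}$ to show $U$ acts there as the reflection $\begin{pmatrix}\lambda&\sqrt{1-\lambda^2}\\\sqrt{1-\lambda^2}&-\lambda\end{pmatrix}$ and $R$ as $Z$, so that $(RU)^k$ is a rotation by $k\arccos\lambda$ whose relevant entry is $T_k(\lambda)$. The only divergence is your final assembly step, which you flag as the main obstacle but which the paper dispatches without any orthogonality check on the rotation planes: since $(\langle G|_a\otimes\mathds{1}_s)|\perp_\lambda\rangle=0$ by construction, one gets $(\langle G|_a\otimes\mathds{1}_s)(RU)^k(|G\rangle_a\otimes|\lambda\rangle)=T_k(\lambda)|\lambda\rangle=T_k(H)|\lambda\rangle$ eigenvector by eigenvector, and expanding an arbitrary $|\psi\rangle=\sum_\lambda\beta_\lambda|\lambda\rangle$ and invoking linearity of the left-hand side as an operator on $\mathcal{H}_s$ immediately yields the full operator identity.
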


\subsection{Quantum subspace diagonalization}

Quantum subspace diagonalization~\cite{mcclean2017subspace,colless2018computation,parrish2019filterdiagonalization,motta2020qite_qlanczos,takeshita2020subspace,huggins2020nonorthogonal,stair2020krylov,cohn2021filterdiagonalization,yoshioka2021virtualsubspace,epperly2021subspacediagonalization,seki2021powermethod,cortes2022krylov,klymko2022realtime,baek2022nonorthogonal,tkachenko2022davidson} is a method for obtaining a variational estimate of the ground state energy of a Hamiltonian $H$.
In the most general setting, we assume access to a set of states
\begin{equation}
\label{krylov_states}
    \{|\psi_k\rangle=U_k|\psi_0\rangle~|~k=0,1,2,...,D-1\}
\end{equation}
that can be prepared on a quantum computer by quantum circuits $U_k$.
First, we form the $D\times D$ matrices $\textbf{H}$ and $\textbf{S}$ whose entries are
\begin{equation}
\begin{split}
    &\textbf{H}_{ij}=\langle\psi_i|H|\psi_j\rangle=\langle\psi_0|U_i^\dagger HU_j|\psi_0\rangle,\\
    &\textbf{S}_{ij}=\langle\psi_i|\psi_j\rangle=\langle\psi_0|U_i^\dagger U_j|\psi_0\rangle.
\end{split}
\end{equation}
$\textbf{S}$ is the \emph{overlap} (Gram) matrix of the Krylov states \eqref{krylov_states}.
Both can be estimated by repeated swap~\cite{buhrman2001fingerprinting} or Hadamard tests~\cite{aharonov2006jonespolynomial}.
Then, having estimated $\textbf{H}$ and $\textbf{S}$, we classically solve the generalized eigenvalue problem
\begin{equation}
\label{gep}
    \textbf{H}\vec{v}=\epsilon\textbf{S}\vec{v}
\end{equation}
and find the lowest eigenvalue $\epsilon$.
This is our variational estimate of the ground state energy, and it corresponds to the lowest expected energy of any state in the subspace
\begin{equation}
    \text{span}\{|\psi_k\rangle=U_k|\psi_0\rangle~|~k=0,1,2,...,D-1\}.
\end{equation}
Note that typically the generalized eigenvalue problem \eqref{gep} must be regularized due to ill-conditioning of $\textbf{S}$ as $D$ increases~\cite{klymko2022realtime,epperly2021subspacediagonalization}: this is discussed in detail in \cref{error_analysis}.

The Lanczos method~\cite{lanczos1950iteration,cullum2002lanczos} is a classical matrix diagonalization method that is closely related to quantum subspace diagonalization.
In it, Krylov states $|\psi_i\rangle$ are obtained by multiplying some reference state $|\psi_0\rangle$ by powers of the Hamiltonian,
\begin{equation}
\label{lanczos}
    \{|\psi_k\rangle=H^k|\psi_0\rangle~|~k=0,1,2,...,D-1\},
\end{equation}
and these are used to construct the matrices $\textbf{H}$ and $\textbf{S}$.
This yields a ground state energy estimate that converges exponentially in $D$~\cite{saad1980lanczos}, assuming infinite precision arithmetic; the impact of noisy $\textbf{H}$ and $\textbf{S}$ is discussed in \cref{error_analysis}.
In general, however, applying this method to quantum systems is exponentially costly in the number of qubits, since the Krylov states have exponential dimension.

To avoid the classical barrier of exponential dimensionality, as noted above, several methods have been proposed to permit a quantum variant of the algorithm.
These include \emph{quantum Lanczos} (QLanczos)~\cite{motta2020qite_qlanczos}, in which the powers of $H$ are replaced by imaginary time evolutions under $H$, methods in which the powers of $H$ are replaced by real time evolutions under $H$~\cite{parrish2019filterdiagonalization,stair2020krylov,cohn2021filterdiagonalization,epperly2021subspacediagonalization,cortes2022krylov,klymko2022realtime}, and a method for approximating powers of $H$ using linear combinations of time evolutions~\cite{seki2021powermethod}.
All of these methods converge to the Lanczos method in appropriate limits.
However, either real or imaginary time evolutions are required, both of which entail approximations.

\section{Lanczos method on a quantum computer}
\label{method}

\subsection{Algorithm description}

We propose a quantum algorithm that exactly produces the Krylov space used in the classical Lanczos method up to finite sample noise.
We will focus on the case where the input block encoding is of a Hamiltonian encoded as a linear combination of Pauli operators as in the example in \cref{block_encoding}, since in this case the measurement scheme is particularly simple.
However, the method can be generalized to other block encodings, as we will describe.

The idea is to use as the Krylov basis vectors
\begin{equation}
\label{chebyshev_lanczos}
    \{|\psi_k\rangle=T_k(H)|\psi_0\rangle~|~k=0,1,2,...,D-1\}.
\end{equation}
The Chebyshev polynomials $T_k(x)$ for ${k=0,...,D-1}$ are a basis for the polynomials of degree less than $D$, so the Krylov space spanned by \eqref{chebyshev_lanczos} is identical to the Krylov space obtained from powers of the Hamiltonian:
\begin{equation}
\label{chebyshev_lanczos_subspace}
\begin{split}
    &\text{span}\{T_k(H)|\psi_0\rangle~|~k=0,1,2,...,D-1\}\\
    &=\text{span}\{H^k|\psi_0\rangle~|~k=0,1,2,...,D-1\}.
\end{split}
\end{equation}
Since the quantum subspace diagonalization method classically finds the state with lowest expected energy in this span, the performance of the Krylov space spanned by Chebyshev polynomials of the Hamiltonian applied to the initial state will be identical to that of powers of the Hamiltonian applied to the initial state, up to noise.

Also, we reviewed in \cref{qubitization_sec} (\cref{chebyshev_lemma}) how to implement block encodings of Chebyshev polynomials of a block encoded Hamiltonian.
There was no approximation in \cref{chebyshev_lemma}, so as long as the input block encoding is exact, the method will exactly produce the subspace \eqref{chebyshev_lanczos_subspace}.

It remains to show how to extract the matrices $\textbf{H}$ and $\textbf{S}$ from the block encodings of $T_k(H)|\psi_0\rangle$.
For $\textbf{S}$ we have
\begin{equation}
\label{overlap_1}
    \textbf{S}_{ij}=\langle\psi_0|T_i(H)T_j(H)|\psi_0\rangle=\big\langle T_i(H)T_j(H)\big\rangle_0,
\end{equation}
where $\langle\cdot\rangle_0$ denotes expectation value with respect to the initial state $|\psi_0\rangle$.
Chebyshev polynomials satisfy the identity
\begin{equation}
\label{chebyshev_product}
    T_i(x)T_j(x)=\frac{1}{2}\Big(T_{i+j}(x)+T_{|i-j|}(x)\Big).
\end{equation}
Inserting this into \eqref{overlap_1} yields
\begin{equation}
\label{overlap_2}
\begin{split}
    \textbf{S}_{ij}&=\frac{1}{2}\big\langle T_{i+j}(H)+T_{|i-j|}(H)\big\rangle_0\\
    &=\frac{1}{2}\Big(\big\langle T_{i+j}(H)\big\rangle_0+\big\langle T_{|i-j|}(H)\big\rangle_0\Big).
\end{split}
\end{equation}

For $\textbf{H}$ we have
\begin{equation}
\label{hamiltonian_1}
    \textbf{H}_{ij}=\big\langle T_i(H)HT_j(H)\big\rangle_0.
\end{equation}
Using the fact that $H=T_1(H)$ and applying \eqref{chebyshev_product} twice, we obtain
\begin{equation}
\label{hamiltonian_2}
\begin{split}
    \textbf{H}_{ij}=\frac{1}{4}\Big(&\big\langle T_{i+j+1}(H)\big\rangle_0+\big\langle T_{|i+j-1|}(H)\big\rangle_0\\
    &+\big\langle T_{|i-j+1|}(H)\big\rangle_0+\big\langle T_{|i-j-1|}(H)\big\rangle_0\Big).
\end{split}
\end{equation}
Since $i,j=0,1,2,...,D-1$, all matrix elements in both matrices are linear combinations of the expectation values
\begin{equation}
\label{necessary_exp_vals}
    \big\langle T_k(H)\big\rangle_0,\quad k=0,1,2,...,2D-1,
\end{equation}
where the highest value $2D-1$ comes from the first term in \eqref{hamiltonian_2} when $i=j=D-1$.
Therefore, in order to construct the matrices $\textbf{H}$ and $\textbf{S}$ it is enough to estimate all of the expectation values \eqref{necessary_exp_vals}.

Given our block encoding $(U,G)$ of $H$, \cref{chebyshev_lemma} yielded \eqref{block_encoding_chebyshev}, which when we take the expectation value with respect to $|\psi_0\rangle$ in the system space becomes
\begin{equation}
\label{chebyshev_exp_val_1}
    \big\langle T_k(H)\big\rangle_0=(\langle G|_a\otimes\langle\psi_0|)(RU)^k(|G\rangle_a\otimes|\psi_0\rangle).
\end{equation}
Since $R$ is a reflection about $|G\rangle_a$, it is Hermitian and also acts as identity on $|G\rangle_a$, so the leftmost $R$ in \eqref{chebyshev_exp_val_1} can be removed, yielding
\begin{equation}
\label{chebyshev_exp_val_2}
\begin{split}
    &\big\langle T_k(H)\big\rangle_0=(\langle G|_a\otimes\langle\psi_0|)U(RU)^{k-1}(|G\rangle_a\otimes|\psi_0\rangle).
\end{split}
\end{equation}
The product of operators in \eqref{chebyshev_exp_val_2} can be rewritten as
\begin{equation}
    U(RU)^{k-1}
    =
    \begin{cases}
        (UR)^{k/2}R(RU)^{k/2}\quad\text{if $k$ is even},\\
        (UR)^{\lfloor k/2\rfloor}U(RU)^{\lfloor k/2\rfloor}\quad\text{if $k$ is odd}.
    \end{cases}
\end{equation}
Therefore, if we define the state
\begin{equation}
\label{big_psi_def}
    |\psi_{\lfloor k/2\rfloor}\rangle=(RU)^{\lfloor k/2\rfloor}(|G\rangle_a\otimes|\psi_0\rangle),
\end{equation}
whose adjoint is
\begin{equation}
    \langle\psi_{\lfloor k/2\rfloor}|=(\langle G|_a\otimes\langle\psi_0|)(UR)^{\lfloor k/2\rfloor},
\end{equation}
then because $U$ and $R$ are both Hermitian, we can rewrite \eqref{chebyshev_exp_val_2} as
\begin{equation}
\label{chebyshev_exp_val_3}
\begin{split}
    &\big\langle T_k(H)\big\rangle_0=
    \begin{cases}
        \langle\psi_{\lfloor k/2\rfloor}|R|\psi_{\lfloor k/2\rfloor}\rangle\quad\text{if $k$ is even},\\
        \langle\psi_{\lfloor k/2\rfloor}|U|\psi_{\lfloor k/2\rfloor}\rangle\quad\text{if $k$ is odd}.
    \end{cases}
\end{split}
\end{equation}

Hence, we can estimate all the matrix elements of $\textbf{H}$ and $\textbf{S}$ by estimating the expectation values in \eqref{chebyshev_exp_val_3} for each $k=0,1,2,...,2D-1$.
We accomplish this as follows:
\begin{enumerate}
    \item Prepare $|\psi_{\lfloor k/2\rfloor}\rangle$ by applying $RU$ $\lfloor k/2\rfloor$ times to $|G\rangle_a\otimes|\psi_0\rangle$.
    
    \item If $k$ is even, we want to measure $R$. To do this, first apply $G^\dagger$ (recall that $|G\rangle_a=G|0\rangle_a$; see \cref{block_encoding}). Then measure $2|0\rangle_a\langle0|_a-\mathds{1}$ on the auxiliary qubits, i.e., measure all auxiliary qubits in the computational basis and return $+1$ if all outcomes are $|0\rangle$, otherwise $-1$.
    
    \item If $k$ is odd, we want to estimate the expectation value of
    \begin{equation}
        U=\sum_{i=0}^{T-1}|i\rangle_a\langle i|_a\otimes P_i.
    \end{equation}
    We do this by separately estimating the expectation value of each term, so for each state preparation we measure some $|i\rangle_a\langle i|_a\otimes P_i$.
    Measure the auxiliary qubits in the computational basis and the system qubits in a local Pauli basis compatible with $P_i$.
    If the outcome of this measurement is $|i\rangle_a$, return the system qubits' measurement outcome for $P_i$, and otherwise return $0$.
    
    \item Return to step 1 and repeat until enough measurements are obtained to estimate the desired expectation value to the desired precision. 
    
\end{enumerate}
\noindent
Note that in step 3 above, locally-compatible Pauli measurements can be grouped using any of the same methods as for standard Pauli Hamiltonian expectation value estimation~\cite{hadfield2022lbcs,huang2021derandomization,wu2023overlappedgrouping,hillmich2021decisiondiagrams,hadfield2021aps,shlosberg2023adaptiveestimation,yen2022deterministicmeasurement}, by measuring the system qubits in the shared local Pauli basis.
Also note that by employing one of these methods, the number of shots depends on the $l1$-norm of the Pauli coefficients in the Hamiltonian, which in our case is fixed to $1$ (see \cref{block_encoding}), so the requirement of measuring in distinct Pauli bases does not contribute additional measurement overhead asymptotically.
The above algorithm is displayed in \cref{alg_summary}.

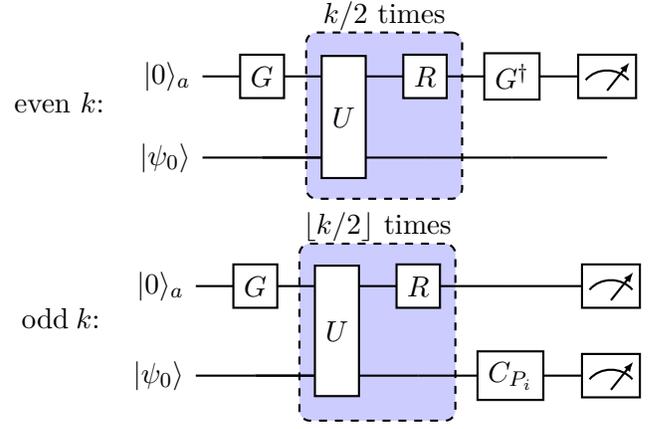
\begin{figure}
    even $k$:
    \begin{quantikz}
        \lstick{$|0\rangle_a$} & \gate{G} & \gate[2]{U}\gategroup[2,steps=2,style={dashed,
rounded corners,fill=blue!20, inner xsep=2pt},
background]{$k/2$ times} & \gate{R} & \gate{G^\dagger} & \meter{} \\
        \lstick{$|\psi_0\rangle$} & \qw & \qw & \qw & \qw & \qw
    \end{quantikz}\\\vspace{0.15in}
    odd $k$:
    \begin{quantikz}
        \lstick{$|0\rangle_a$} & \gate{G} & \gate[2]{U}\gategroup[2,steps=2,style={dashed,rounded corners,fill=blue!20, inner xsep=2pt},background]{$\lfloor k/2\rfloor$ times} & \gate{R} & \qw & \meter{} \\
        \lstick{$|\psi_0\rangle$} & \qw & \qw & \qw & \gate{\text{$C_{P_i}$}} & \meter{}
    \end{quantikz}
    \caption{The quantum circuits used in our algorithm. For each $k=0,1,2,...,2D-1$ we repeatedly sample from one of the above circuits, where $D$ is the desired Krylov space dimension. When $k$ is even, we apply the upper circuit. When $k$ is odd, we apply the lower circuit for each Pauli term $P_i$ in the Hamiltonian. All measurements are in the computational basis, and $C_{P_i}$ is a single layer of single-qubit Clifford gates that maps $P_i$ to a diagonal Pauli operator. As defined in the text, $U$ block encodes the Hamiltonian $H$, $G$ prepares the state that identifies the block containing $H$, and $R$ reflects about that state.}
    \label{alg_summary}
\end{figure}

\subsection{Runtime and qubit cost}
\label{runtime_analysis}

Let us summarize the requirements and costs for our quantum implementation of the Lanczos method:
\begin{enumerate}
    \item The required input is a block encoding $(U,G)$ of $H$: there are various options for constructing this, as discussed in \cref{block_encoding_sec}.
    Using the example in \cref{block_encoding_sec}, for $n$ qubits and $T$ Hamiltonian terms the cost of $U$ is $nT$, the cost of $G$ is $2T$, and the cost of $R$ is $4T$, all in $\lceil\log_2 T\rceil$-controlled single-qubit gates.
    
    \item The number of measurements depends on the desired error and the classical method for regularizing and solving the generalized eigenvalue problem \eqref{gep}. This analysis is provided in \cref{error_analysis}.

    \item Each state preparation [given in \eqref{big_psi_def}] requires preparing $|G\rangle_a\otimes|\psi_0\rangle$, followed by at most ${D-1}$ applications of $RU$. We then either measure in a local Pauli basis, or apply $G^\dagger$ and then measure. The latter leads to the longest coherent sequence of operations required by the algorithm. Using the costs of the block encoding in point 1 above, we find that this sequence requires at most ${(D-1)nT+4DT}$ $\lceil\log_2 T\rceil$-controlled single-qubit gates.
    
    \item The required qubits are those encoding the system, i.e., those that $H$ acts upon, plus auxiliary qubits whose state $|G\rangle_a$ identifies the block in which $H$ is encoded. The number of auxiliary qubits required depends on the choice of block encoding, but can be chosen to be $\lceil\log_2 T\rceil$ when the Hamiltonian contains $T$ terms, as discussed in \cref{block_encoding_sec}.

\end{enumerate}

\subsection{Other types of block encoding}

The algorithm above can be applied for any block encoding of a linear combination of Pauli operators as in \eqref{unitary}.
The method is independent of how the indices of the Pauli operators are encoded in the auxiliary register, but the specifics of the measurement scheme do depend on the terms being Pauli operators.
To see this, recall that when $k$ is odd, we want to estimate the expectation value of the block encoding unitary $U$ in the prepared state $|\psi_{\lfloor k/2\rfloor}\rangle$.
When the terms are Pauli operators, we accomplish this by separately estimating each term using the lower circuit in \cref{alg_summary}.

When a different block encoding is the input, however, we cannot necessarily measure $U$ by repeatedly measuring in Pauli bases.
In such a case, since by assumption $U$ is a unitary that we have a quantum circuit for, we can estimate its expectation value using a Hadamard test or equivalent (see for example~\cite{cortes2022krylov}).
This would replace the Pauli basis measurement in the odd $k$ circuit in \cref{alg_summary}.
Note that this is also an option in the case of a Pauli Hamiltonian, where it represents trading off increased circuit depth (for the Hadamard test or equivalent) for a lower number of repetitions (not having to repeat for all Pauli bases).

\section{Error analysis}
\label{error_analysis}

In this section we analyse the error scaling in our algorithm subject to finite sample noise and regularization of the overlap matrix involved in the generalized eigenvalue problem \eqref{gep}.
Finite sample noise enters because the matrix elements of $\textbf{H}$ and $\textbf{S}$ are obtained from expectation values estimated by repeated measurements.
This introduces a statistical error to the resulting energy estimates.

In addition, when executed on a real quantum computer any quantum algorithm will be subject to device noise.
Since our error analysis applies to arbitrary perturbations of the measured quantities, it can also be used to obtain energy error bounds under such device noise, but since that is device specific, we focus on finite sample noise for simplicity.

The overlap matrix $\textbf{S}$ must be regularized because the stability of the generalized eigenvalue problem \eqref{gep} depends on $\textbf{S}$ being well-conditioned.
In practice the condition number of $\textbf{S}$ grows exponentially with the Krylov space dimension $D$, which has been proven for a Krylov space generated by powers of the Hamiltonian~\cite{beckerman2017singularvalues}, and appears to hold numerically for Chebyshev polynomials of the Hamiltonian.
This means that the generalized eigenvalue problem must be regularized, which may be done by thresholding the eigenvalues of $\textbf{S}$, i.e., projecting both $\textbf{H}$ and $\textbf{S}$ onto the span of the eigenvectors of $\textbf{S}$ whose eigenvalues are above some threshold $\epsilon>0$~\cite{klymko2022realtime,epperly2021subspacediagonalization}.

For Krylov spaces generated by real time evolutions, the resulting error in the ground state energy estimate was analyzed in~\cite{epperly2021subspacediagonalization} (related techniques for handling noise and regularization are discussed in~\cite{klymko2022realtime}).
We adapt their proof to our Krylov space generated by Chebyshev polynomials of the Hamiltonian, with the following result:

\begin{theorem}
\label{error_theorem}
    Let $H$ be an $N\times N$ Hamiltonian with energies ${E_0\le E_1\le\cdots\le E_{N-1}}$ in the range $[-1,1]$.
    Suppose we compute the $D$-dimensional Krylov space spanned by \eqref{chebyshev_lanczos} using the method in \cref{method}, yielding noisy estimates
    \begin{equation}
        (\widetilde{\textbf{H}},\widetilde{\textbf{S}})=(\textbf{H}+\boldsymbol\Delta_H,\textbf{S}+\boldsymbol\Delta_S)
    \end{equation}
    of $(\textbf{H},\textbf{S})$, for some Hermitian perturbations $(\boldsymbol\Delta_H,\boldsymbol\Delta_S)$ whose spectral norms are $(\eta_H,\eta_S)$.
    Denote the total noise rate as
    \begin{equation}
        \eta\coloneqq\sqrt{\eta_H^2+\eta_S^2}.
    \end{equation}
    Let
    \begin{equation}
        \gamma_0=\langle E_0|\psi_0\rangle
    \end{equation}
    be the overlap of the initial reference state $|\psi_0\rangle$ with the true ground state $|E_0\rangle$.
    Let $\widetilde{\textbf{S}}$ be regularized by projecting it (and $\widetilde{\textbf{H}}$) onto the subspace spanned by its eigenvectors with eigenvalues above some threshold $\epsilon>0$.
    Let $\epsilon_\text{total}$ be the sum of the eigenvalues of $\textbf{S}$ discarded by regularizing it according to $\epsilon$ in the same way.
    Then provided the noise rate $\eta$ is sufficiently small (small enough that the assumptions of \cref{noise_theorem} can be satisfied), there exists a choice of threshold
    \begin{equation}
    \label{threshold_condition}
        \epsilon=\Omega\left((D^4\eta)^\frac{1}{1+\alpha}\right)
    \end{equation}
    for some constant $0\le\alpha\le1/2$ such that for any $\delta>0$, the error $\mathcal{E}$ in the ground state energy estimate coming from the regularized version of the noisy problem $(\widetilde{\textbf{H}},\widetilde{\textbf{S}})$ is bounded by
    \begin{equation}
    \label{total_error_bound}
    \begin{split}
        \mathcal{E}\le O\Bigg(&(D^4\eta)^{\frac{1}{1+\alpha}}+\frac{\sqrt{\delta}\,\epsilon_\text{total}}{|\gamma_0|^2}\\
        &\quad+\delta+\frac{1}{|\gamma_0|^2}\left(1+\frac{\delta}{2}\right)^{-D}\Bigg).
    \end{split}
    \end{equation}
\end{theorem}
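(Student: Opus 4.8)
The plan is to split the energy error $\mathcal{E}$ into two conceptually distinct contributions and bound each separately: a \emph{noiseless approximation error}, measuring how well the exact $D$-dimensional Krylov space captures the ground state, and a \emph{perturbation error}, measuring how the finite-sample noise $(\boldsymbol\Delta_H,\boldsymbol\Delta_S)$ together with the thresholding regularization distorts the solution of the generalized eigenvalue problem \eqref{gep}. Because \eqref{chebyshev_lanczos_subspace} establishes that the Chebyshev Krylov space coincides exactly with the power Krylov space, the noiseless part is governed by the classical Lanczos / Kaniel–Paige–Saad convergence theory and can be imported essentially unchanged; the work is concentrated in the perturbation part, which is where I would adapt the argument of~\cite{epperly2021subspacediagonalization} (encapsulated here by \cref{noise_theorem}).

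For the noiseless term I would exhibit a trial vector of the form $p(H)\,|\psi_0\rangle$ for a degree-$(D-1)$ polynomial $p$ chosen to be large at $E_0$ and small on the rest of the spectrum, and bound the resulting Rayleigh quotient. Rather than assuming a genuine spectral gap, I would introduce the free parameter $\delta$ as a \emph{soft gap}: writing $c_j=\langle E_j|\psi_0\rangle$, the energies with $E_j\le E_0+\delta$ contribute at most $\delta$ to the variational estimate (the $+\delta$ term), while the window $[E_0+\delta,1]$ is suppressed by a shifted-and-scaled Chebyshev polynomial, giving a factor decaying exponentially in $D$ as $(1+\delta/2)^{-D}$. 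The prefactor $1/|\gamma_0|^2$ is then exactly $\tan^2\theta_0$ arising from the normalization $p(E_0)=1$ against the initial overlap $|c_0|^2=|\gamma_0|^2$. This is where the stated tolerance of a vanishing gap originates.

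For the perturbation term I would invoke \cref{noise_theorem}, which controls the noisy problem once $\widetilde{\textbf{S}}$ has been thresholded at level $\epsilon$: after thresholding, the smallest retained eigenvalue is $\Omega(\epsilon)$, so the effective conditioning is $O(1/\epsilon)$ and the noise-induced shift in the lowest Ritz value scales like $\eta$ times polynomial factors in $D$ divided by $\epsilon$. The dimensional factor $D^4$ enters through the spectral norms of the Chebyshev Gram and Hamiltonian matrices and the dimension of the subspace. The remaining ingredient is that thresholding $\textbf{S}$ discards a subspace carrying total weight $\epsilon_\text{total}$, which I would show costs at most $\sqrt\delta\,\epsilon_\text{total}/|\gamma_0|^2$ in the ground-state estimate (the surviving ground-state overlap being controlled by $|\gamma_0|^2$ and the residual energy by $\sqrt\delta$). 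Finally I would balance the regularization loss, which grows with $\epsilon$ through an assumed decay profile of the $\textbf{S}$-spectrum parametrized by $0\le\alpha\le1/2$, against the noise amplification $\sim D^4\eta/\epsilon$, which shrinks with $\epsilon$; optimizing yields $\epsilon=\Omega((D^4\eta)^{1/(1+\alpha)})$ as in \eqref{threshold_condition} and the leading term $(D^4\eta)^{1/(1+\alpha)}$ in \eqref{total_error_bound}, and summing the four pieces gives the claimed bound.

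I expect the main obstacle to be the perturbation step. Adapting~\cite{epperly2021subspacediagonalization} from the real-time-evolution setting to the Chebyshev setting requires re-deriving the polynomial-in-$D$ norm bounds that produce the $D^4$ factor, and — most delicately — verifying that a vector with large ground-state overlap survives the thresholding projection, so that the regularized problem still ``sees'' the ground state and the soft-gap convergence bound remains applicable after projection. By contrast, the noiseless convergence estimate and the final optimization over $\epsilon$ are comparatively routine once that survival lemma and the associated conditioning bounds are established.
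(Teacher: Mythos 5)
Your skeleton---a triangle-inequality decomposition, \cref{noise_theorem} for the noise contribution, and a shifted-Chebyshev residual polynomial with a soft gap $\delta$ for the convergence contribution---is the same strategy the paper uses. But there are two genuine gaps. First, the step you defer (``I would show [thresholding] costs at most $\sqrt{\delta}\,\epsilon_\text{total}/|\gamma_0|^2$,'' plus your ``survival lemma'') is not a routine add-on: it is essentially the entire technical content of the paper's proof, namely \cref{threshold_theorem}. Moreover, the paper does not obtain the thresholding cost as a separate additive correction to the exact Krylov bound, as your decomposition assumes. Instead, the trial state is built from the min--max residual polynomial directly inside the thresholded space $R'$, and the three quantities you treat as independent---the $\delta$ term, the exponential term, and the $\sqrt{\delta}\,\epsilon_\text{total}$ term---all emerge from a single Rayleigh quotient, coupled through the perturbed Vandermonde-type entries $\alpha_{ij}=|\textbf{F}'_{ij}-T_j(E_i)|$, the Parseval-type bound $\sum_j c_j^2\lesssim\sqrt{\delta}+\cdots$ on the polynomial's Chebyshev coefficients (this is where the $\sqrt{\delta}$ actually comes from, not from a ``residual energy''), and the overlap-survival denominator $\left(|\gamma_0|-2\sqrt{(k+1)\epsilon}\right)^2$, which is controlled by the spectral-norm bound $\|\textbf{K}'-\textbf{K}\|\le\sqrt{\epsilon}$. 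Without this joint argument, your clean split into ``exact Krylov error'' plus ``thresholding cost'' has no proof, and it is not clear it can be carried out separately, since the noiseless non-thresholded minimizer need not lie in $R'$ at all.

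Second, your account of where the threshold condition \eqref{threshold_condition} and the first error term come from is wrong in a way that changes the exponent. In \cref{noise_theorem}, $\alpha$ is not a decay profile of the $\textbf{S}$-spectrum; it is the exponent in the relative bound $\left|\textbf{v}_i^\dagger\textbf{H}\textbf{v}_j\right|\le\mu\min(\lambda_i,\lambda_j)^{1-\alpha}\max(\lambda_i,\lambda_j)^\alpha$. The threshold is not obtained by balancing regularization loss against noise amplification: it is forced by the assumptions \eqref{epsilon_lower_bound} and \eqref{gap_condition}, with $\chi\sim(\|\textbf{S}\|/\epsilon)^\alpha\eta_S+\eta_H$, which give $\epsilon=\Omega\left((D^4\eta)^{1/(1+\alpha)}\right)$; the noise error is then $O(D^4\chi)=O(D^4\eta/\epsilon^\alpha)=O\left((D^4\eta)^{1/(1+\alpha)}\right)$, while the $\epsilon_\text{total}$ term simply remains as a separate contribution. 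If one takes your version literally---amplification $D^4\eta/\epsilon$ balanced against a loss growing like $\epsilon^\alpha$---the optimum yields an error scaling as $(D^4\eta)^{\alpha/(1+\alpha)}$, which for $\alpha\le 1/2$ is strictly weaker than the theorem's claim.
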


The proof is given in \cref{lanczos_convergence_app}, which also gives a more precise characterization of what ``provided the noise rate $\eta$ is sufficiently small'' means.
The noise rate is not required to be exponentially small in any of the problem parameters.
Although \eqref{total_error_bound} is given as an asymptotic scaling for simplicity, the two results it is based on (\cref{noise_theorem} and \cref{threshold_theorem} in \cref{lanczos_convergence_app}) give explicit bounds on the errors due to noise and thresholding, respectively, so the total error bound \eqref{total_error_bound} could also be reformulated as an explicit (i.e., nonasymptotic) bound.

The result \eqref{total_error_bound} may be interpreted as follows.
The last term is the error due to the exact (i.e., noiseless and nonthresholded) Krylov space, which vanishes exponentially quickly with the Krylov space dimension $D$.
The third term, $\delta$, is an energy error tolerance that determines the rate of convergence of the last term.
The error bound \eqref{total_error_bound} holds for any $\delta>0$, i.e., $\delta$ is only relevant to the analysis, and does not actually impact the algorithm.
It characterizes the rate of exponential decay of amplitudes of energies more than $\delta$ above the ground state energy, via the final term in \eqref{total_error_bound}.
Hence if $\delta$ is chosen to be equal to the spectral gap $\Delta$, then the third term in \eqref{total_error_bound} can be removed, because in that case there are no excited states within $\delta$ of the ground state energy.
However, if $\delta$ is chosen to be larger than the spectral gap, then although the ground state energy is approximated, the corresponding state will not approximate the true ground state in general: it will only approximate some arbitrary state in the low-energy subspace within $\delta$ of the ground state energy.

The second term in \eqref{total_error_bound} is due to regularization of $\widetilde{\textbf{S}}$ by $\epsilon$, i.e., to the discarding of eigenspaces of $\widetilde{\textbf{S}}$ with eigenvalues smaller than $\epsilon$.
Finally, the first term in \eqref{total_error_bound} is due to finite sample noise.
The factor of $D^{\frac{4}{1+\alpha}}$ in this term results from the proof technique of Theorem 2.7 in \cite{epperly2021subspacediagonalization} (reproduced in our notation as \cref{noise_theorem} in \cref{lanczos_convergence_app}), in which the authors note that this factor might be reduced or eliminated by an improved proof; we discuss this further below.
Numerics also indicate linear scaling $\epsilon=\Theta(\eta)$ (i.e., $\alpha=0$) if various heuristics are adopted to select an optimal threshold $\epsilon$ [note that this scaling is compatible with \eqref{threshold_condition}].
Such heuristics involve starting at a high threshold and decreasing it as long as the energy estimate continues to converge.

To understand the implications of \cref{error_theorem} in practice, we can begin by observing that since the eigenvalues of $\widetilde{\textbf{S}}$ decrease exponentially, as mentioned above,
\begin{equation}
\label{epsilon_relations}
    \epsilon_\text{total}=\Theta(\epsilon).
\end{equation}
Also, when the noise $\eta$ comes from finite numbers of samples to obtain each matrix element in $(\widetilde{\textbf{H}},\widetilde{\textbf{S}})$,
\begin{equation}
\label{meas_relation}
    \eta=\Theta(1/\sqrt{M}),
\end{equation}
when $M$ measurements are performed per matrix element.

Note that a complete expression for $\eta$ would also include some dependence on $D$, since $\eta$ is obtained from spectral norms of the perturbation matrices $(\boldsymbol\Delta_H$ and $\boldsymbol\Delta_S)$.
However, $\eta$ enters \eqref{total_error_bound} in the first two terms via \eqref{threshold_condition} and \eqref{epsilon_relations}, and in the context of this total energy error bound, no strong dependence on $D$ is generally found in practice, whether coming from $\eta$ or from the explicit factors in \eqref{threshold_condition}.
While in the presence of noise and thresholding, it sometimes happens that the error increases over short ranges of $D$, these increases are relatively small and the overall behavior is still bounded below a monotonically decreasing envelope.
See our numerics in \cref{numerics} and the numerics in \cite{klymko2022realtime,epperly2021subspacediagonalization}, for example.
Providing a rigorous proof of this property is an interesting direction for future work.

Finally, although \cite{epperly2021subspacediagonalization} provided numerics suggesting that $\alpha=1/4$ is a good value, that was for the original bound through which $\alpha$ entered their proof, and our numerics suggest that in practice, at least in the context of \cref{error_theorem} for our Chebyshev polynomial subspace, $\alpha=0$ is a better value [see \cref{numerics}].
Hence, using \eqref{threshold_condition} with $\alpha=0$, \eqref{epsilon_relations}, and \eqref{meas_relation}, we obtain
\begin{equation}
    \epsilon_\text{total}=\Theta(\epsilon)=\Theta(\eta)=\Theta\left(\frac{1}{\sqrt{M}}\right),
\end{equation}
which summarizes the ``in practice'' relations between these quantities.
Subject to the above assumptions, \eqref{total_error_bound} becomes the ``in practice'' error bound
\begin{equation}
\label{total_error_bound_practice}
    \mathcal{E}\le O\left(\frac{1}{\sqrt{M}}+\frac{\sqrt{\delta}}{|\gamma_0|^2\sqrt{M}}+\delta+\frac{1}{|\gamma_0|^2}\left(1+\frac{\delta}{2}\right)^{-D}\right).
\end{equation}

From \eqref{total_error_bound_practice}, we can find the Krylov space dimension $D$ required to reach error $\mathcal{E}$ in the ground state energy estimate.
From the final term in \eqref{total_error_bound_practice}, in the limit of small initial state overlap $|\gamma_0|$ and $\delta$ we obtain
\begin{equation}
\label{krylov_dimension_required_init}
    D=\Theta\left(\frac{\log\frac{1}{|\gamma_0|}+\log\frac{1}{\mathcal{E}}}{\delta}\right).
\end{equation}

To simplify this scaling, note that since $\delta$ appears as a term in the error \eqref{total_error_bound_practice} whenever it is larger than the spectral gap $\Delta$ (as discussed above), if the target error $\mathcal{E}$ is larger than $\Delta$, we must choose $\delta=O(\mathcal{E})$.
Given this, there is no reason to choose it to be smaller, so we can choose ${\delta=\Theta(\mathcal{E})}$.
Substituting this into \eqref{krylov_dimension_required_init} yields
\begin{equation}
\label{krylov_dimension_required}
    D=\Theta\left(\frac{\log\frac{1}{|\gamma_0|}+\log\frac{1}{\mathcal{E}}}{\mathcal{E}}\right).
\end{equation}

If instead the target error $\mathcal{E}$ is smaller than $\Delta$, then we should instead choose $\delta=\Delta$, since in that case the $\delta$ term in \eqref{total_error_bound_practice} (and \eqref{total_error_bound}) vanishes as noted above.
Hence in general
\begin{equation}
    \delta=\Theta(\max(\mathcal{E},\Delta)),
\end{equation}
so a general bound on the required Krylov space dimension $D$ is
\begin{equation}
\label{krylov_dimension_required_2}
    D=\Theta\left[\left(\log\frac{1}{|\gamma_0|}+\log\frac{1}{\mathcal{E}}\right)\min\left(\frac{1}{\mathcal{E}},\frac{1}{\Delta}\right)\right].
\end{equation}
Recall that $D$ is also the maximum circuit depth in terms of queries to the block encoding operators (see \eqref{chebyshev_exp_val_3} and \cref{alg_summary}).

The first two terms in \eqref{total_error_bound_practice} can then provide an asymptotic lower bound on the number of measurements $M$.
From the first term, we find that
\begin{equation}
    M=\Omega\left(\frac{1}{\mathcal{E}^2}\right),
\end{equation}
which should be unsurprising since this is always the scaling obtained from algorithms based on repeated sampling.
From the second term, we obtain a lower bound in terms of the initial state overlap $|\gamma_0|^2$:
\begin{equation}
    M=\Omega\left(\frac{\delta}{\mathcal{E}^2|\gamma_0|^4}\right)=\Omega\left(\frac{1}{\mathcal{E}|\gamma_0|^4}\right),
\end{equation}
where the second equality follows using ${\delta=\Theta(\mathcal{E})}$ as above.
Since $M$ must satisfy both of these asymptotic lower bounds, we can combine them as
\begin{equation}
    M=\Theta\left(\frac{1}{\mathcal{E}^2}+\frac{1}{\mathcal{E}|\gamma_0|^4}\right),
\end{equation}
the total number of measurements required per matrix element.
Finally, since all matrix elements of $\widetilde{\textbf{S}}$ and $\widetilde{\textbf{H}}$ are calculated from the $2D$ expectation values \eqref{necessary_exp_vals}, the total number of measurements required for the entire algorithm is
\begin{equation}
\label{measurements_lower_bound}
    DM=\widetilde{\Theta}\left(\left(\frac{1}{\mathcal{E}^2}+\frac{1}{\mathcal{E}|\gamma_0|^4}\right)\min\left(\frac{1}{\mathcal{E}},\frac{1}{\Delta}\right)\right),
\end{equation}
where the $\widetilde{\Theta}$ indicates that the logarithmic factor in \eqref{krylov_dimension_required_2} has been suppressed.

For a nonasymptotic analysis of this scaling, one would need to account for covariance of the resulting matrix element estimates, but this is unnecessary to obtain the asymptotic scaling \eqref{measurements_lower_bound}.
The first term in \eqref{total_error_bound_practice} might in principle also have some bearing on the number of measurements as a function of the Krylov space dimension, but as noted above, \cite{epperly2021subspacediagonalization} points out that this scaling may be an artifact of their proof technique.
Regardless, the number of measurements depends only polynomially on the Krylov space dimension as well as the problem parameters $|\gamma_0|$ and $\mathcal{E}$.

\section{Numerical demonstrations}
\label{numerics}

\begin{figure}[htp!]
    \centering
    \includegraphics[width=\columnwidth]{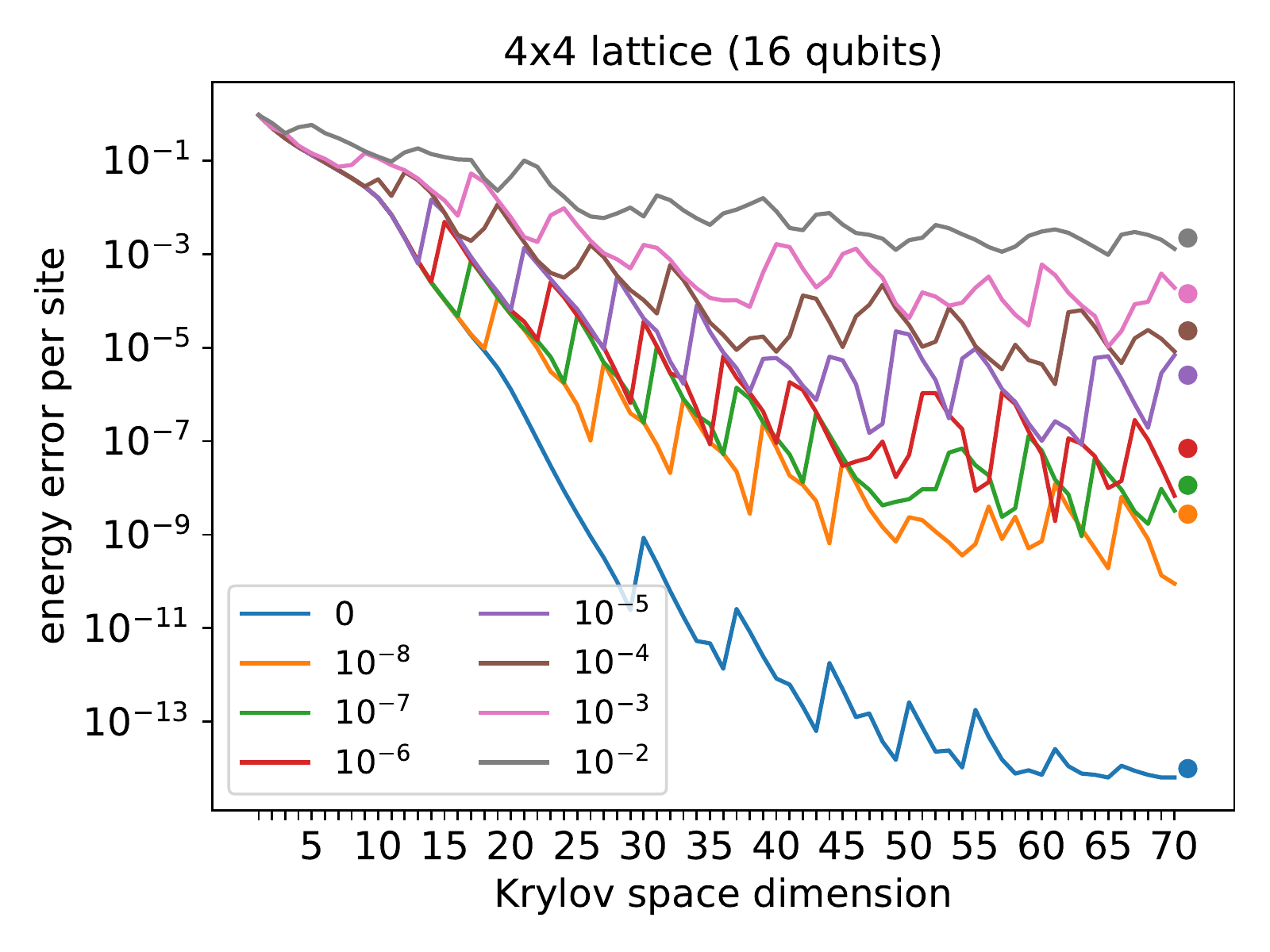}
    \includegraphics[width=\columnwidth]{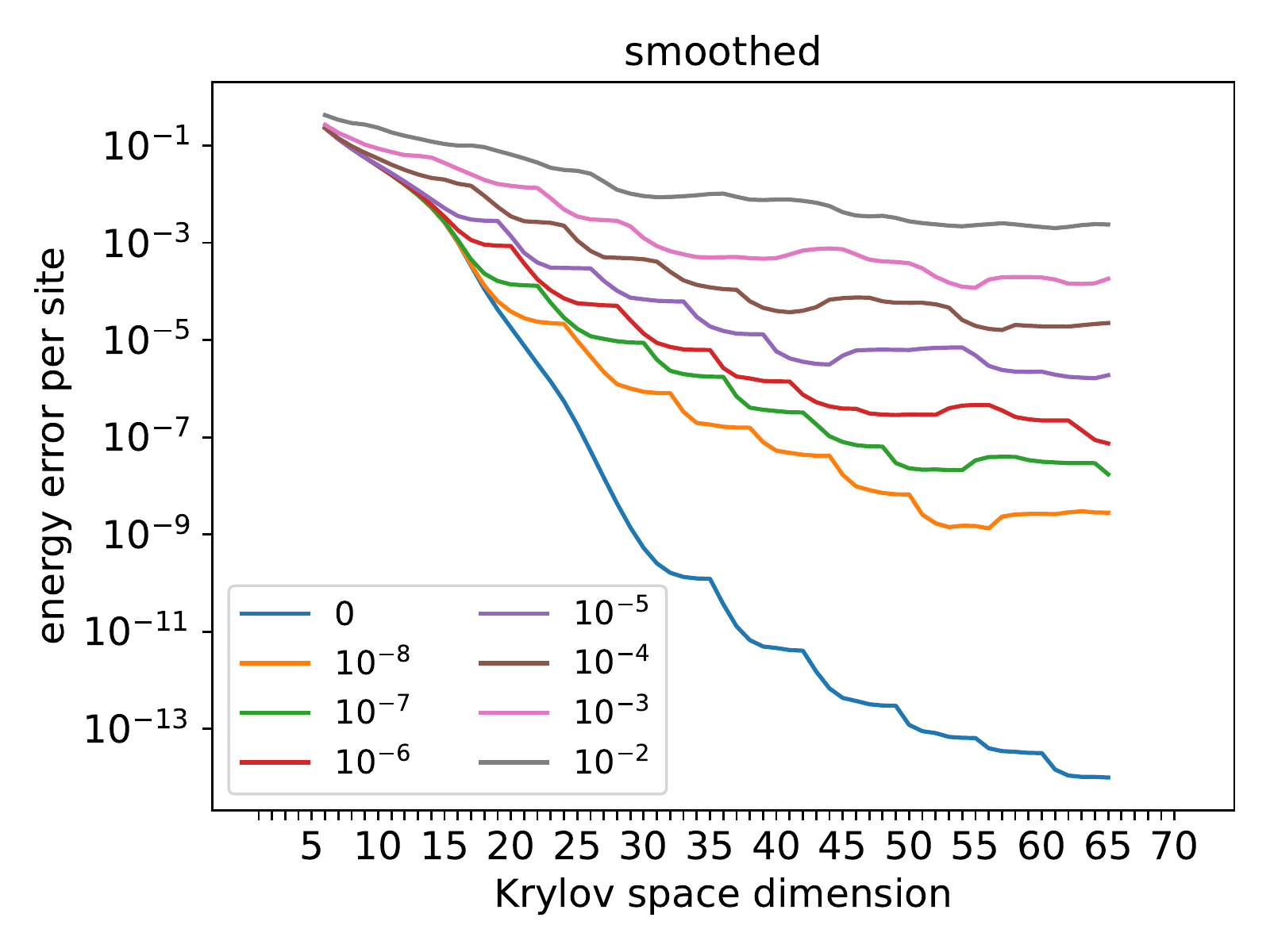}
    \caption{Results of numerical simulations of the nonorthogonalized Lanczos method implemented by our algorithm, for the J1-J2 model on a $4\times4$ square lattice with couplings $J_1=1$ and $J_2=0.5$. The upper panel shows energy error per site versus Krylov space dimension. The different curves correspond to different noise rates (given in the legend), which were imposed as Gaussian noise on the expectation values that would be estimated in the experiment [in \eqref{necessary_exp_vals}] to simulate the effect of finite sample noise. The points to the right of the curves are the means of the rightmost 10 points. Since the upper panel might be difficult to read, the lower panel represents the same data, but smoothed by taking the mean of each successive sequence of 10 points (each mean plotted above the center of its corresponding sequence).}
    \label{spin_error_vs_dim}
\end{figure}

As a demonstration of our method, we classically simulated its Krylov spaces for some example Hamiltonians.
For each Hamiltonian, we directly calculated the expectation values in \eqref{necessary_exp_vals}, which are the quantities we would actually estimate on the quantum computer in a real experiment, then used them to construct the subspace matrices $\textbf{H}$ and $\textbf{S}$ as in \eqref{overlap_2} and \eqref{hamiltonian_2}.
We simulated the effect of finite sample noise by adding Gaussian noise with various standard deviations to those expectation values.
We then regularized the generalized eigenvalue problem by thresholding as described in the previous section.

\begin{figure}[htp!]
    \centering
    \includegraphics[width=\columnwidth]{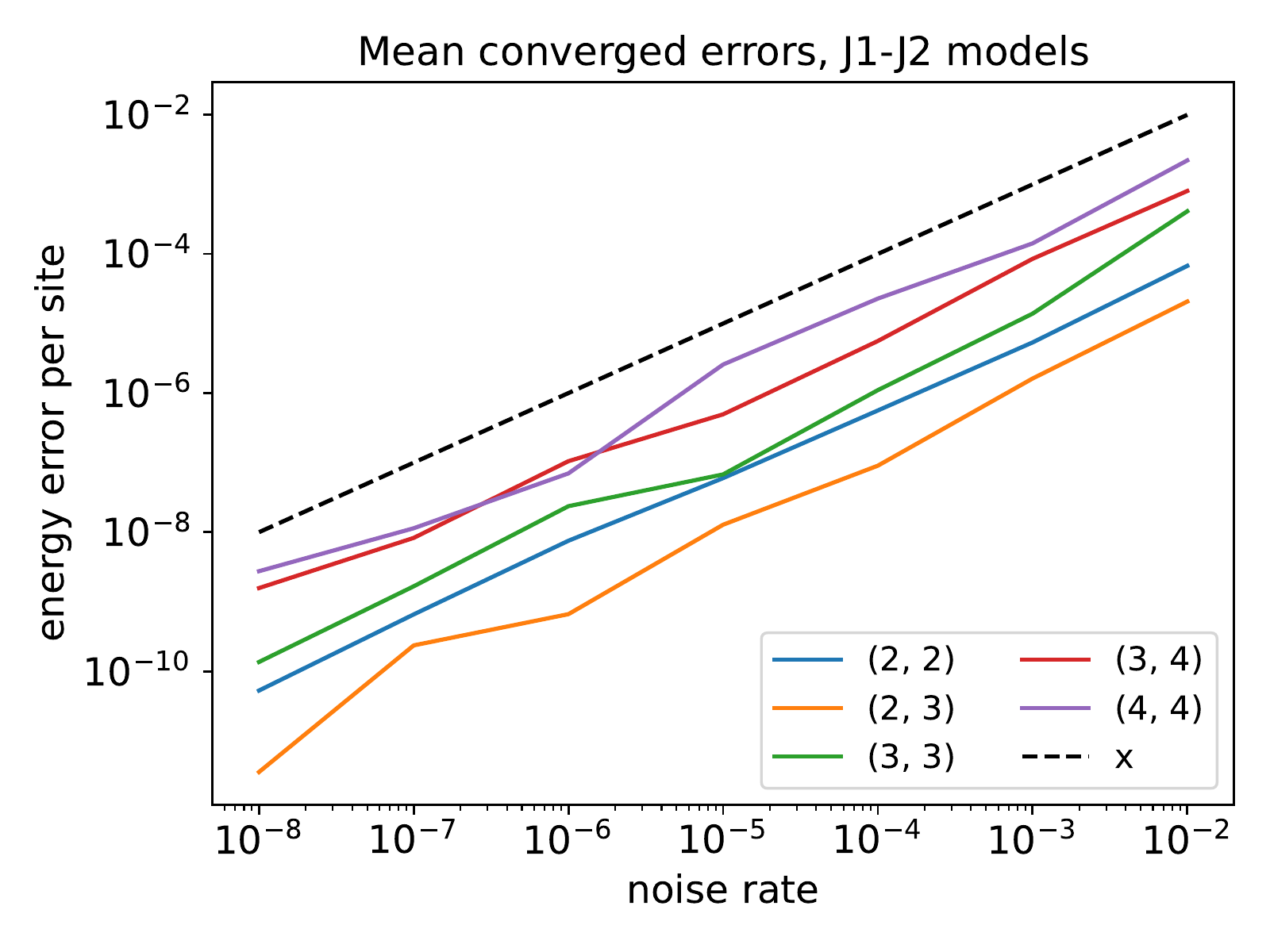}
    \includegraphics[width=\columnwidth]{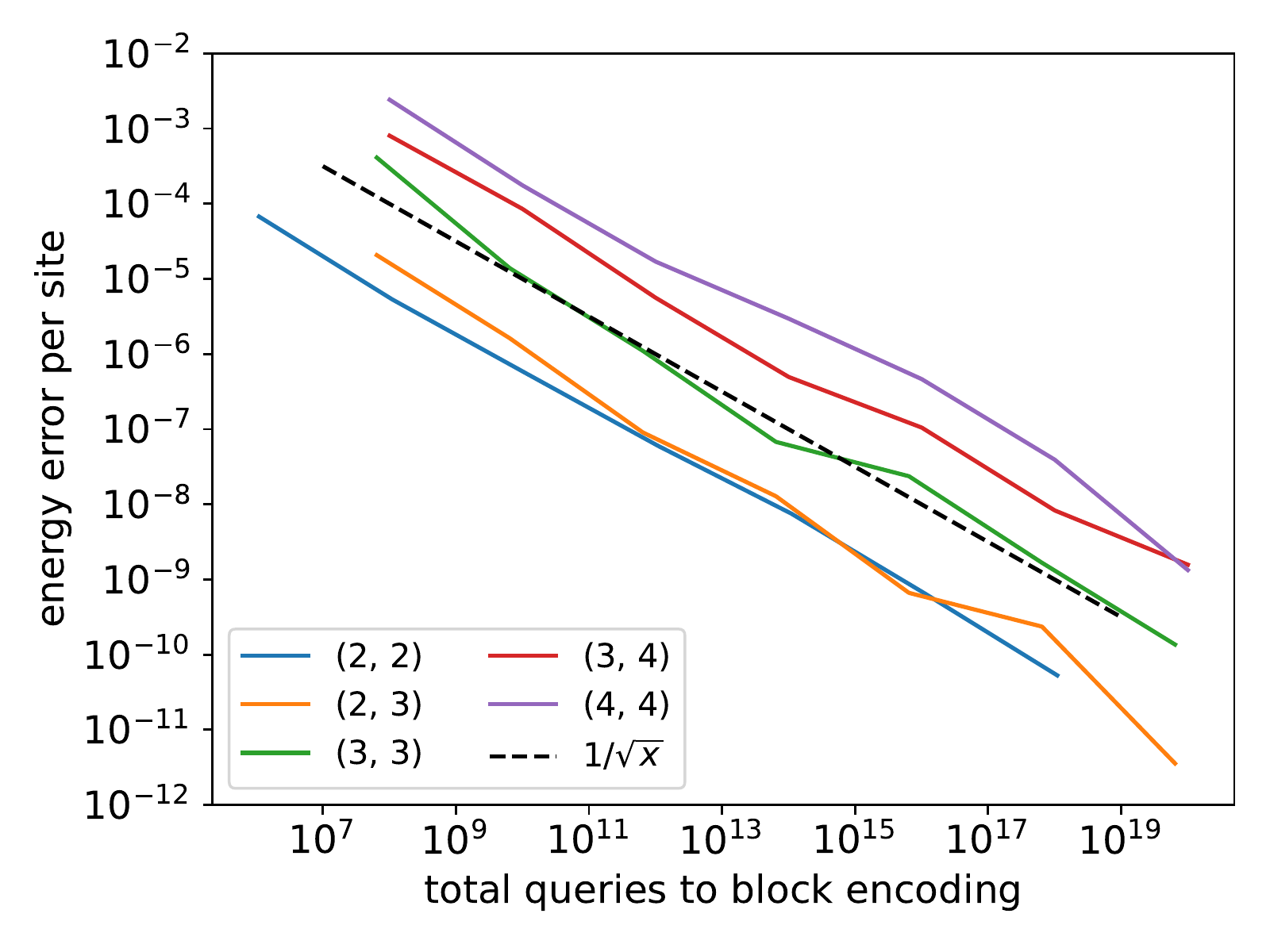}
    \caption{Energy error per site versus noise rate per measured quantity (upper panel) and total queries to block encoding (lower panel) for the J1-J2 model on several different square lattices given in the legend. These energy errors were each averaged over a sequence of 10 dimensions once convergence was reached (i.e., once the contribution of the last term in \eqref{total_error_bound} had become negligible); for example, the purple curve in the upper panel is the points on the right of the upper panel in \cref{spin_error_vs_dim}, plotted against the corresponding noise rates. The dashed lines illustrate the expected scalings from the theoretical analysis. Note that the two panels show the same information since the total queries to the block encoding is proportional to the total number of shots times the circuit depth, which were chosen to be the depths required to reach convergence. For both plots above we took those depths to be 5 for $(2,2)$, 40 for $(2,3)$ and $(3,3)$, and 50 for $(3,4)$ and $(4,4)$.}
    \label{error_vs_noise}
\end{figure}

For zero noise simulations, the threshold was set to $10^{-13}$.
For simulations with nonzero noise, the threshold was set proportional to the noise rate, with the constants being 30 for the spin models and 50 for molecules.
These constants were chosen to eliminate most spurious eigenvalues due to ill-conditioning: such spurious eigenvalues appear as dramatic downward jumps in the energy estimates with increasing subspace dimension.
In an actual experiment one would want to choose the thresholds by detecting such jumps automatically, but for the purpose of our simple demonstration it was enough to choose constants by hand that eliminated most of the jumps.
Since some spurious eigenvalues remained, we further eliminated outliers by taking as our datapoints the means of the middle 10\% of 100 independent runs at each noise rate and subspace dimension.

\begin{figure}[htp!]
    \centering
    \includegraphics[width=\columnwidth]{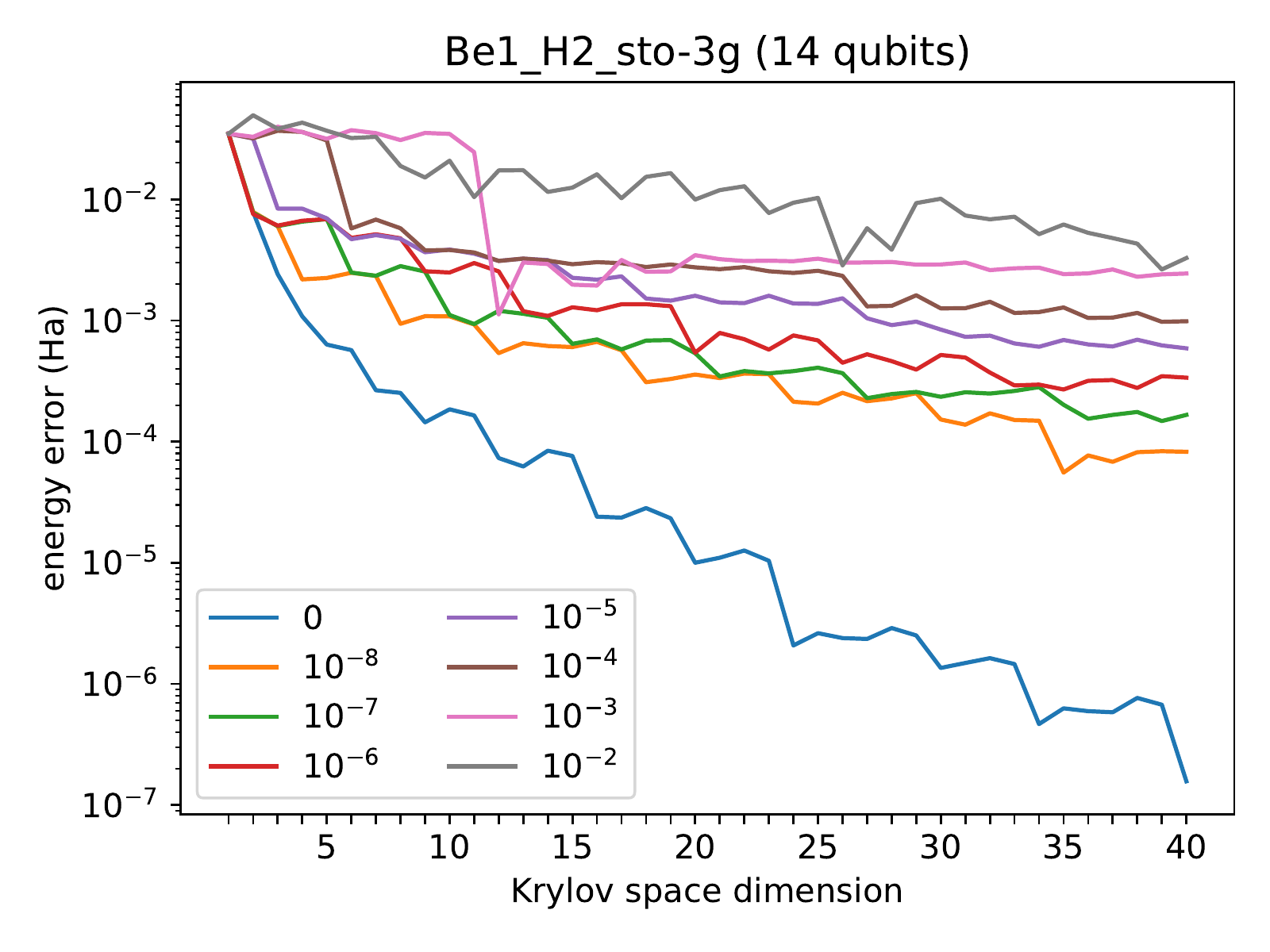}
    \caption{Results of numerical simulations of the nonorthogonalized Lanczos method implemented by our algorithm, for BeH$_2$ in the STO-3G basis, equilibrium configuration (bondlength $1.3264$ \AA). The plot shows energy error per site versus Krylov space dimension. As in \cref{spin_error_vs_dim}, the different curves correspond to different noise rates (given in the legend), which were imposed as Gaussian noise on the expectation values that would be estimated in the experiment [in \eqref{necessary_exp_vals}] to simulate the effect of finite sample noise. This is an all-electron calculation (no frozen core).}
    \label{es_error_vs_dim}
\end{figure}

Results of these simulations are shown in \cref{spin_error_vs_dim,error_vs_noise,es_error_vs_dim}.
\cref{spin_error_vs_dim} shows energy error as a function of Krylov space dimension for the J1-J2 spin model on a $4\times4$ square lattice.
The Hamiltonian of the J1-J2 model is given by
\begin{equation}
    H=J_1\sum_{\langle i,j\rangle}\vec{S}_i\cdot\vec{S}_j+J_2\sum_{\langle\langle i,j\rangle\rangle}\vec{S}_i\cdot\vec{S}_j,
\end{equation}
where $\langle\cdot,\cdot\rangle$ denotes nearest-neighbor pairs, $\langle\langle\cdot,\cdot\rangle\rangle$ denotes next-nearest-neighbor pairs, and the couplings in our case were set to $J_1=1$ and $J_2=0.5$.
The initial state was chosen to be the antiferromagnetic state, i.e., a checkerboard of alternating spins across the lattice, which yielded an initial state overlap of $|\gamma_0|=0.179$.
The upper panel shows the complete data, while for easier reading the lower panel shows a smoothed version of the same curves obtained by averaging each set of energy errors at 10 successive subspace dimensions.
In both cases, after a period of ill-defined behavior at low Krylov space dimension, the overall trends at all noise rates settle into roughly exponential decay towards some minimum ``converged'' error level, with local fluctuations along the way.
This is expected based on the theoretical error bounds in the previous section: the error decays exponentially with Krylov space dimension towards a minimum value set by the first three terms in \eqref{total_error_bound}.

\cref{error_vs_noise} shows energy error versus simulated noise rate for the J1-J2 model on a number of different square lattices.
These energy errors were each averaged over a sequence of 10 dimensions once convergence was reached (i.e., once the contribution of the last term in \eqref{total_error_bound} had become negligible).
As we would expect, at this point energy error and noise rate are related approximately linearly (the dashed line shows the curve $y=x$).

Finally, \cref{es_error_vs_dim} shows energy error versus Krylov space dimension for the electronic structure Hamiltonian BeH$_2$ in the STO-3G basis.
The initial state was chosen to be the Hartree-Fock state, which has an initial state overlap of $0.986$.
Unfortunately, numerical instability in our classical simulation prevented us from reaching as high Krylov space dimension for the electronic structure Hamiltonian as for the spin models, and it is not clear that the energies had finished converging to their minimum values.
This is possibly due to the choice of the Hartree-Fock state as the initial state: although it is a good initial approximation to the ground state in this case, the Hartree-Fock state contains no electron correlation.
This means that all electron correlation of the lowest-energy state in the Krylov space must be contributed by the corresponding polynomial of the Hamiltonian, which may live in a poorly conditioned part of the space.

Since we have access only to classically simulable small models, we cannot draw strong conclusions from these simulations regarding the asymptotic behavior of our method.
However, at least for these test cases, the behavior is roughly what we would expect from the theory in the previous section.

\section{Conclusion}
\label{conclusion}

In this work we have shown how to exactly reproduce the Krylov space of the classical Lanczos method on a quantum computer, up to sampling error, while avoiding the exponential classical cost of representing quantum states.
The required Krylov space dimension is inverse polynomial in the desired ground state energy error, and the sampling overhead is inverse polynomial in both the ground state energy error and the initial state overlap.
To our knowledge, the present method is conceptually the simplest use of block encoding in quantum algorithms to date, since it only requires applying up to one local basis rotation per circuit in addition to the block encoding unitaries themselves (see \cref{alg_summary}).

We conclude by discussing how the present method compares with existing quantum algorithms for approximating ground states.
As noted in the introduction, if we compare to previous quantum subspace diagonalization methods~\cite{motta2020qite_qlanczos,parrish2019filterdiagonalization,stair2020krylov,cohn2021filterdiagonalization,seki2021powermethod,cortes2022krylov,klymko2022realtime}, our algorithm benefits from being exactly equivalent to the classical Lanczos method up to sampling noise, without requiring approximate real or imaginary time evolution.
This offers two advantages.
First, the error in approximating real or imaginary time evolution, even if it can be suppressed, can never be reduced to zero.
Second, in quantum algorithms for simulating real time evolution, the circuit depths increase as the target error is reduced.
There are many methods to approximate time evolutions, but we mention two to illustrate this point.
First, if the time evolutions are approximated by Trotterization, then they require depth of order $\epsilon^{-1/k}$ where $k$ is the Trotter order and $\epsilon$ is the error.
Hence deep circuits will be required merely to reach high accuracy in approximating the Krylov states: if $d$ digits of accuracy are desired, then the resulting circuit depths scale as $O(\exp(d/r))$.

On the other hand, if the time evolutions are approximated by qubitization~\cite{low2019qubitization}, the resulting circuit depths scale much more favorably with error as $O(\log(1/\epsilon))$.
However, this case is easily compared to our algorithm, because qubitization uses the same block encoding unitaries that our method is based upon.
Thus, while qubitization would require circuits using up to $O(D+\log(1/\epsilon))$ applications of the block encoding unitaries, as well as other controlled phases, to construct a Krylov space of dimension $D$, our method requires circuits of depth at most $D$ in the same block encoding unitaries, with the only additional gates being the local basis rotation mentioned above.

This means that in practice, the advantage of our method over prior quantum subspace expansions is that it avoids all question of the accuracy of approximation of the Krylov states themselves, completely eliminating this source of error and circuit depth.
This reduces both depth and number of measurements required to obtain a result of a given accuracy.
Since circuit depth and number of measurements are key barriers to the success of quantum algorithms and will remain so even into the fault-tolerant era, this advance is significant.

In comparison to using quantum imaginary time evolution to approximate the ground state, we do not require the underlying quantum state to have finite correlation length as it goes through the evolution.
Even in the presence of a finite correlation length, there are significant prefactors in imaginary time evolution that lead to high computational cost (true, the cost is $4^{O(\log n)}=O(\text{poly}(n))$, but this can be a high-degree polynomial~\cite{motta2020qite_qlanczos}).
Adiabatic state preparation~\cite{farhi2000adiabatic} requires a gapped path from some efficiently-solvable Hamiltonian to the target Hamiltonian, while our method is independent of the spectral gap when that is small (see \cref{error_analysis}).
It also avoids the typical pitfalls of parametric optimization that appear in variational quantum eigensolvers~\cite{peruzzo2014vqe}, such as local minima and barren plateaus~\cite{mcclean2018barren,uvarov2021barren,cerezo2021barren,wang2021noiseinduced,liu2022laziness}, and admits rigorous error bounds (\cref{error_analysis}).

The most well-known algorithm for quantum computation of ground state energies is phase estimation.
To be successful, both our method and phase estimation require an initial state overlap $|\gamma_0|=\Omega(1/\text{poly}(n))$, for different reasons.
For phase estimation, the state overlap determines the success probability of the algorithm, while for our method it impacts the number of measurements required to obtain small errors (only entering the circuit depth logarithmically).
Similar to other quantum subspace diagonalization methods, our algorithm has quantifiable robustness to noise in the sense that it produces energy estimates with bounded error even if the subspace matrices are not exact.
This robustness is not encountered in conventional phase estimation (although more recent ground-state preparation algorithms such as~\cite{lin2022heisenberglimited} do possess some noise resilience).
On the other hand, phase estimation achieves Heisenberg scaling with error, i.e., $O(1/\mathcal{E})$, while our algorithm possesses the typical $O(1/\mathcal{E}^2)$ scaling of methods with finite sample noise.
Hence, although phase estimation may be preferable to reach extremely high precision energy estimates in the long run, many specific problems of interest have fixed error demands depending on their application.
For such problems, in the intermediate term while robustness to noise is still the crucial provision, our method presents a useful option.

~
\begin{acknowledgements}
The authors thank Ethan Epperly for sharing his expertise on analysis of Krylov methods, Cristian Cortes for correctly suggesting that a tighter error bound might be available than we gave in the first version of the manuscript, and Giuseppe Carleo, Guglielmo Mazzola, Andreas L\"auchli, Garnet Chan, Lin Lin, and Sergey Bravyi for providing helpful feedback on early drafts.
W.~M.~K. acknowledges support from the National Science Foundation, Grant No. DGE-1842474.
\end{acknowledgements}


\bibliographystyle{quantum}
\bibliography{references}

\appendix

\section{Proof of \cref{chebyshev_lemma}}
\label[appendix]{proofs}

The following lemma is based directly on Section IV in~\cite{low2019qubitization}:

\textbf{Lemma~\ref{chebyshev_lemma}} (Chebyshev polynomials from block encoding)\textbf{.}
\emph{
    Given a block encoding $(U,G)$ of a Hamiltonian $H$, such that $U^2=\mathds{1}$, let
    \begin{equation}
    \label{R_def_app}
        R\coloneqq (2|G\rangle_a\langle G|_a-\mathds{1}_a)\otimes\mathds{1}_s
    \end{equation}
    be the reflection around $|G\rangle_a$ in the auxiliary space.
    Then
    \begin{equation}
    \label{block_encoding_chebyshev_app}
        (\langle G|_a\otimes\mathds{1}_s)(RU)^k(|G\rangle_a\otimes\mathds{1}_s)=T_k(H)
    \end{equation}
    for any $k=0,1,2,...$, where $T_k(\cdot)$ is the $k$th Chebyshev polynomial of the first kind.
    In other words, $(RU)^k$ is a block encoding of $T_k(H)$.
}
\begin{proof}

Let $|\lambda\rangle$ denote an arbitrary eigenvector of $H$ with eigenvalue $\lambda$.
We can evaluate the action of $U$ on $|G\rangle_a\otimes|\lambda\rangle$ as follows:
\begin{equation}
\label{lemma1step1}
    U(|G\rangle_a\otimes|\lambda\rangle)=|G\rangle_a\otimes H|\lambda\rangle+\eta_\lambda|\perp_\lambda\rangle,
\end{equation}
where $\eta_\lambda$ is some positive number that preserves normalization and $|\perp_\lambda\rangle$ is some (normalized) state of all of the qubits that is orthogonal to $|G\rangle_a$ in the auxiliary space, i.e., such that
\begin{equation}
\label{perp_condition}
    (\langle G|_a\otimes\mathds{1}_s)|\perp_\lambda\rangle=0.
\end{equation}
Hence, \eqref{lemma1step1} simply represents resolving the state $U(|G\rangle_a\otimes|\lambda\rangle)$ into its components parallel and orthogonal to $|G\rangle_a$ in the auxiliary space: the latter ($|\perp_\lambda\rangle$) is unknown, but the former is
\begin{equation}
\begin{split}
    &(|G\rangle_a\langle G|_a\otimes\mathds{1}_s)U(|G\rangle_a\otimes|\lambda\rangle)\\
    &=(|G\rangle_a\otimes\mathds{1}_s)(\langle G|_a\otimes\mathds{1}_s)U(|G\rangle_a\otimes|\lambda\rangle)\\
    &=(|G\rangle_a\otimes\mathds{1}_s)(\mathds{1}_a\otimes H|\lambda\rangle)=|G\rangle_a\otimes H|\lambda\rangle
\end{split}
\end{equation}
by \eqref{block_encoding_def}, which forms the first term in \eqref{lemma1step1}.

Since $|\lambda\rangle$ is an eigenstate of $H$, \eqref{lemma1step1} becomes
\begin{equation}
\label{lemma1step2}
\begin{split}
    U(|G\rangle_a\otimes|\lambda\rangle)&=|G\rangle_a\otimes \lambda|\lambda\rangle+\eta_\lambda|\perp_\lambda\rangle\\
    &=\lambda|G\rangle_a\otimes |\lambda\rangle+\sqrt{1-\lambda^2}|\perp_\lambda\rangle,
\end{split}
\end{equation}
where the second step follows by identifying ${\eta_\lambda=\sqrt{1-\lambda^2}}$ as the required normalization factor for the second term.
Hence, if we formally denote
\begin{equation}
\label{virtual_qubit}
    |G\rangle_a\otimes|\lambda\rangle=\begin{pmatrix}1\\0\end{pmatrix},\quad|\perp_\lambda\rangle=\begin{pmatrix}0\\1\end{pmatrix},
\end{equation}
we may write
\begin{equation}
\label{U_virtual_action}
    U\begin{pmatrix}1\\0\end{pmatrix}=\begin{pmatrix}\lambda\\\sqrt{1-\lambda^2}\end{pmatrix}.
\end{equation}
Note that the vectors in \eqref{virtual_qubit} represent the states of the ``virtual qubit'' that is referred to in the name \emph{qubitization}.

We also know that $U^2=\mathds{1}$ \eqref{self_inv}, which implies that within the two-dimensional subspace spanned by \eqref{virtual_qubit}, the matrix representing $U^2$ must be
\begin{equation}
    U^2\rightarrow\begin{pmatrix}1&0\\0&1\end{pmatrix}.
\end{equation}
From this and \eqref{U_virtual_action}, it follows that the matrix form of $U$ in this two-dimensional subspace must be
\begin{equation}
\label{U_matrix}
    U\rightarrow
    \begin{pmatrix}
        \lambda&\sqrt{1-\lambda^2}\\
        \sqrt{1-\lambda^2}&-\lambda
    \end{pmatrix},
\end{equation}
since \eqref{U_virtual_action} defines the first column of this matrix, and the second column is required to make it self-inverse.

The matrix for $U$ in \eqref{U_matrix} is a reflection.
We can transform it into a rotation if we can flip the signs of the entries in the second row, since the result will have the form
\begin{equation}
\label{rotation_matrix}
    U_y(\theta)=
    \begin{pmatrix}
        \cos\theta&\sin\theta\\
        -\sin\theta&\cos\theta
    \end{pmatrix}
\end{equation}
for a rotation angle $\theta$ defined by
\begin{equation}
\label{theta_def}
    \cos\theta=\lambda,\quad\sin\theta=\sqrt{1-\lambda^2}.
\end{equation}
In order to flip the signs in the second row of \eqref{U_matrix}, we need to multiply on the left by $Z$ within the two-dimensional subspace:
\begin{equation}
\begin{split}
    &Z
    \begin{pmatrix}
        \lambda&\sqrt{1-\lambda^2}\\
        \sqrt{1-\lambda^2}&-\lambda
    \end{pmatrix}\\
    &=
    \begin{pmatrix}
        1&0\\
        0&-1
    \end{pmatrix}
    \begin{pmatrix}
        \lambda&\sqrt{1-\lambda^2}\\
        \sqrt{1-\lambda^2}&-\lambda
    \end{pmatrix}\\
    &=
    \begin{pmatrix}
        \lambda&\sqrt{1-\lambda^2}\\
        -\sqrt{1-\lambda^2}&\lambda
    \end{pmatrix},
\end{split}
\end{equation}
which has the desired form \eqref{rotation_matrix}.

One way to implement the action of $Z$ on the two-dimensional subspace spanned by \eqref{virtual_qubit} is to reflect around $|G\rangle_a$ in the auxiliary space, i.e., to implement $R$ as defined in \eqref{R_def_app}.
The actions of $R$ on the subspace are
\begin{equation}
\begin{split}
    R\begin{pmatrix}1\\0\end{pmatrix}&=(2|G\rangle_a\langle G|_a-\mathds{1}_a)(|G\rangle_a\otimes|\lambda\rangle)\\
    &=|G\rangle_a\otimes|\lambda\rangle=\begin{pmatrix}1\\0\end{pmatrix},\\
    R\begin{pmatrix}0\\1\end{pmatrix}&=(2|G\rangle_a\langle G|_a-\mathds{1}_a)|\perp_\lambda\rangle\\
    &=-|\perp_\lambda\rangle=-\begin{pmatrix}0\\1\end{pmatrix},\\
\end{split}
\end{equation}
where the second equation follows from \eqref{perp_condition}.
Hence, the matrix representation of $R$ within the two-dimensional subspace is
\begin{equation}
    R\rightarrow\begin{pmatrix}1&0\\0&-1\end{pmatrix}=Z.
\end{equation}
Therefore, within the two-dimensional subspace
\begin{equation}
    RU\rightarrow
    \begin{pmatrix}
        \lambda&\sqrt{1-\lambda^2}\\
        -\sqrt{1-\lambda^2}&\lambda
    \end{pmatrix}
    =
    \begin{pmatrix}
        \cos\theta&\sin\theta\\
        -\sin\theta&\cos\theta
    \end{pmatrix},
\end{equation}
where $\theta$ is defined by \eqref{theta_def}.

As an aside, note that the form \eqref{R_def_app} for $R$ is not actually a necessary condition, although it is sufficient.
All that we strictly require is that $R$ implements a reflection around $|G\rangle_a$ within each two-dimensional subspace associated to each $\lambda$: its action on the remainder of the auxiliary space is irrelevant.
We will exploit this fact later on in constructing efficient implementations of $R$.

We can now use the fact that $k$ powers of a rotation by angle $\theta$ are simply the rotation by angle $k\theta$, so
\begin{equation}
    (RU)^k\rightarrow
    \begin{pmatrix}
        \cos k\theta&\sin k\theta\\
        -\sin k\theta&\cos k\theta
    \end{pmatrix}.
\end{equation}
But
\begin{equation}
    \cos\theta=\lambda\quad\Rightarrow\quad\cos k\theta=T_k(\lambda),
\end{equation}
where $T_k(\cdot)$ is the $k$th Chebyshev polynomial of the first kind, so
\begin{equation}
    (RU)^k\rightarrow
    \begin{pmatrix}
        T_k(\lambda)&\sqrt{1-T_k^2(\lambda)}\\
        -\sqrt{1-T_k^2(\lambda)}&T_k(\lambda)
    \end{pmatrix}.
\end{equation}
By the definition \eqref{virtual_qubit} of the two-dimensional subspace associated to $\lambda$, this implies that
\begin{equation}
\begin{split}
    (\langle G|_a\otimes\mathds{1}_s)(RU)^k(|G\rangle_a\otimes|\lambda\rangle)&=T_k(\lambda)|\lambda\rangle\\
    &=T_k(H)|\lambda\rangle.
\end{split}
\end{equation}
The final step in the proof is to note that any arbitrary state $|\psi\rangle$ of the system can be expressed as a superposition of Hamiltonian eigenstates:
\begin{equation}
    |\psi\rangle=\sum_\lambda\beta_\lambda|\lambda\rangle,
\end{equation}
so
\begin{equation}
\begin{split}
    &(\langle G|_a\otimes\mathds{1}_s)(RU)^k(|G\rangle_a\otimes\mathds{1}_s)\,|\psi\rangle\\
    &=\sum_\lambda\beta_\lambda(\langle G|_a\otimes\mathds{1}_s)(RU)^k(|G\rangle_a\otimes|\lambda\rangle)\\
    &=\sum_\lambda\beta_\lambda T_k(H)|\lambda\rangle=T_k(H)|\psi\rangle.
\end{split}
\end{equation}
Since this applies for any state $|\psi\rangle$, we have
\begin{equation}
    (\langle G|_a\otimes\mathds{1}_s)(RU)^k(|G\rangle_a\otimes\mathds{1}_s)=T_k(H),
\end{equation}
as desired.

\end{proof}

\section{Convergence of the noisy, thresholded Krylov method with Chebyshev polynomials}
\label[appendix]{lanczos_convergence_app}

In this section we prove \cref{error_theorem}, which is stated in the main text and bounds the error of our method subject to noise and regularization via eigenvalue thresholding.

\textbf{Theorem~\ref{error_theorem}.}
\emph{
    Let $H$ be an $N\times N$ Hamiltonian with energies ${E_0\le E_1\le\cdots\le E_{N-1}}$ in the range $[-1,1]$.
    Suppose we compute the $D$-dimensional Krylov space spanned by \eqref{chebyshev_lanczos} using the method in \cref{method}, yielding noisy estimates
    \begin{equation}
        (\widetilde{\textbf{H}},\widetilde{\textbf{S}})=(\textbf{H}+\boldsymbol\Delta_H,\textbf{S}+\boldsymbol\Delta_S)
    \end{equation}
    of $(\textbf{H},\textbf{S})$, for some Hermitian perturbations $(\boldsymbol\Delta_H,\boldsymbol\Delta_S)$ whose spectral norms are $(\eta_H,\eta_S)$.
    Denote the total noise rate as
    \begin{equation}
        \eta\coloneqq\sqrt{\eta_H^2+\eta_S^2}.
    \end{equation}
    Let
    \begin{equation}
        \gamma_0=\langle E_0|\psi_0\rangle
    \end{equation}
    be the overlap of the initial reference state $|\psi_0\rangle$ with the true ground state $|E_0\rangle$.
    Let $\widetilde{\textbf{S}}$ be regularized by projecting it (and $\widetilde{\textbf{H}}$) onto the subspace spanned by its eigenvectors with eigenvalues above some threshold $\epsilon>0$.
    Let $\epsilon_\text{total}$ be the sum of the eigenvalues of $\textbf{S}$ discarded by regularizing it according to $\epsilon$ in the same way.
    Then provided the noise rate $\eta$ is sufficiently small (small enough that the assumptions of \cref{noise_theorem} can be satisfied), there exists a choice of threshold
    \begin{equation}
    \label{threshold_condition_app}
        \epsilon=\Omega\left((D^4\eta)^\frac{1}{1+\alpha}\right)
    \end{equation}
    for some constant $0\le\alpha\le1/2$ such that for any $\delta>0$, the error $\mathcal{E}$ in the ground state energy estimate coming from the regularized version of the noisy problem $(\widetilde{\textbf{H}},\widetilde{\textbf{S}})$ is bounded by
    \begin{equation}
    \label{total_error_bound_app}
    \begin{split}
        \mathcal{E}\le O\Bigg(&(D^4\eta)^{\frac{1}{1+\alpha}}+\frac{\sqrt{\delta}\,\epsilon_\text{total}}{|\gamma_0|^2}\\
        &\quad+\delta+\frac{1}{|\gamma_0|^2}\left(1+\frac{\delta}{2}\right)^{-D}\Bigg).
    \end{split}
    \end{equation}
}

\noindent
Note: the first term in \eqref{total_error_bound_app} comes from Theorem 2.7 in \cite{epperly2021subspacediagonalization} (reproduced below as \cref{noise_theorem}), and is subject to additional assumptions given in the statement of \cref{noise_theorem} below.

\begin{proof}

Let $E_0$ be the true ground state energy of $H$, let $E_0'$ be the energy estimate from the noiseless, thresholded Krylov space, and let $\widetilde{E}_0'$ be the energy estimate from the noisy, thresholded Krylov space.
Then the total error is $\mathcal{E}=|\widetilde{E}_0'-E_0|$, for which we obtain the bound \eqref{total_error_bound_app} via the triangle inequality as a first step:
\begin{equation}
\label{first_step_bound}
    \mathcal{E}=\left|\widetilde{E}_0'-E_0\right|\le\left|\widetilde{E}_0'-E_0'\right|+\left|E_0'-E_0\right|.
\end{equation}
We will upper bound the first term, characterizing the error due to noise, using Theorem 2.7 of \cite{epperly2021subspacediagonalization}, which can be applied directly to our case.
We will upper bound the second term in \eqref{first_step_bound} using \cref{threshold_theorem}, proved below, which is based on Theorem 3.1 in \cite{epperly2021subspacediagonalization}, but which we had to modify to apply to our case, since the version in \cite{epperly2021subspacediagonalization} is specific to subspaces generated by real time evolutions.

Theorem 2.7 in~\cite{epperly2021subspacediagonalization} may be restated in our notation as follows:
\begin{lemma}[Theorem 2.7 in~\cite{epperly2021subspacediagonalization}]
\label{noise_theorem}
    Instate the prevailing notation.
    Let $(\lambda_i,\textbf{v}_i)$ be the eigenpairs of $\textbf{S}$, ordered by eigenvalue from largest to smallest.
    Let $m$ be the greatest index such that $\lambda_m>\epsilon$.
    Four assumptions are required for this theorem to hold:
    \begin{enumerate}
        \item (gap at thresholding level) Assume that the eigenvalues of $\textbf{S}$ are sufficiently separated at the threshold level and the noise rate is low enough that
        \begin{equation}
        \label{eig_separation_condition}
            \lambda_{m+1}+\eta_S\le\epsilon.
        \end{equation}
        This assumption is merely to ensure that during thresholding the same numbers of eigenvalues are discarded from the noisy and noiseless overlap matrices, so that the resulting thresholded matrix problems have the same dimensions.

        \item (bound on $\textbf{H}$ relative to $\textbf{S}$) Assume that
        \begin{equation}
            \left|\textbf{v}_i^\dagger\textbf{H}\textbf{v}_j\right|\le\mu\min(\lambda_i,\lambda_j)^{1-\alpha}\max(\lambda_i,\lambda_j)^\alpha
        \end{equation}
        for all $i,j$, for some constants $\mu>0$ and ${0\le\alpha\le1/2}$.
        This bound always holds if ${\alpha=1/2}$, but smaller $\alpha$ may be more realistic in practice (see \cite{epperly2021subspacediagonalization}).

        \item (bound on noise rate relative to threshold) Let $\rho>0$ be such that $(1+\rho)\epsilon\le\lambda_m$.
        Assume that
        \begin{equation}
        \label{epsilon_lower_bound}
            D^4\chi\le\epsilon,
        \end{equation}
        where $\chi$ is defined as
        \begin{equation}
        \label{chi_def}
            \chi\coloneqq3(2+\mu)\left(1+\frac{1}{\rho}\right)\left(\frac{\|\textbf{S}\|}{\epsilon}\right)^\alpha\eta_S+\eta_H.
        \end{equation}

        \item (gap condition) Assume that 
        \begin{equation}
        \label{gap_condition}
            \arctan(E_1')-\arctan(E_0')\ge\arcsin\left(\frac{D^4\chi}{\epsilon}\right),
        \end{equation}
        where $E_0',E_1'$ are the lowest and second-to-lowest eigenvalues of the noiseless, thresholded problem.
    \end{enumerate}
    Then the difference between the lowest eigenvalue $E_0'$ of the noiseless thresholded problem and the lowest eigenvalue $\widetilde{E}_0'$ of the noisy thresholded problem is bounded as
    \begin{equation}
    \label{noisy_energy_bound}
        \left|\arctan(E_0')-\arctan(\widetilde{E}_0')\right|\le\arcsin(\kappa D^4\chi),
    \end{equation}
    ~\\
    where $\kappa$ is the condition number of $\arctan(E_0')$.
    [Note: Theorem 2.7 in~\cite{epperly2021subspacediagonalization} also includes the assumption
    \begin{equation}
        \left(1+\frac{1}{\rho}\right)\eta_S\le\epsilon,
    \end{equation}
    but this is guaranteed to hold by \eqref{epsilon_lower_bound}.]
\end{lemma}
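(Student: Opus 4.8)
The statement is a stability bound for the lowest eigenvalue of the thresholded definite pencil $(\textbf{H},\textbf{S})$ under the Hermitian perturbations $(\boldsymbol\Delta_H,\boldsymbol\Delta_S)$, so the plan is to reduce it to a standard Hermitian eigenvalue perturbation problem and then track how the noise propagates through the regularization. First I would fix the projector $P$ onto the span of the retained eigenvectors $\textbf{v}_1,\dots,\textbf{v}_m$ of $\textbf{S}$ (those with $\lambda_i>\epsilon$) and write the thresholded problem as the $m$-dimensional pencil $(P\textbf{H}P,\,P\textbf{S}P)$. The role of assumption \eqref{eig_separation_condition} is exactly to guarantee that the noisy overlap matrix $\widetilde{\textbf{S}}$ also retains precisely $m$ eigenvectors, so that $E_0'$ and $\widetilde{E}_0'$ are genuinely comparable as the lowest eigenvalues of two $m\times m$ problems. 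I would then pass to the homogeneous form: writing each generalized eigenvalue as an angle $\theta=\arctan E$, the eigenvalues are the roots of $\det(\cos\theta\,P\textbf{H}P-\sin\theta\,P\textbf{S}P)=0$, which treats the perturbations of $\textbf{H}$ and $\textbf{S}$ symmetrically and explains why the conclusion \eqref{noisy_energy_bound} is phrased in terms of $\arctan$.

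The crux is to bound the perturbation of the whitened operator $A\coloneqq\textbf{S}^{-1/2}\textbf{H}\textbf{S}^{-1/2}$ (restricted to the retained subspace), whose eigenvalues are exactly the generalized eigenvalues $E_i'$, relative to its noisy counterpart $\widetilde{A}\coloneqq\widetilde{\textbf{S}}^{-1/2}\widetilde{\textbf{H}}\widetilde{\textbf{S}}^{-1/2}$. On the retained subspace $\|P\textbf{S}^{-1/2}P\|\le\epsilon^{-1/2}$, so the inverse-square-root factors amplify the perturbations by at most $\epsilon^{-1/2}$; the difficulty is that this amplification, applied to $\textbf{H}$, could blow up the whitened entries when $\lambda_i\approx\epsilon$. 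This is precisely what assumption 2 controls: the bound $|\textbf{v}_i^\dagger\textbf{H}\textbf{v}_j|\le\mu\min(\lambda_i,\lambda_j)^{1-\alpha}\max(\lambda_i,\lambda_j)^\alpha$ keeps the whitened matrix elements $\textbf{v}_i^\dagger\textbf{H}\textbf{v}_j/\sqrt{\lambda_i\lambda_j}$ bounded, and it is the source of the factor $(\|\textbf{S}\|/\epsilon)^\alpha$ appearing in $\chi$ in \eqref{chi_def}. Combining the $\textbf{S}^{-1/2}$ and $\textbf{H}$ contributions, using the margin $(1+\rho)\epsilon\le\lambda_m$ and the smallness condition \eqref{epsilon_lower_bound} to ensure the whitening is stable under the noise, I would arrive at an operator-norm bound of the form $\|\widetilde{A}-A\|=O(D^4\chi)$. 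The factor $D^4$ enters when the entrywise control furnished by assumption 2 is converted into an operator-norm bound, by summing over the $O(D^2)$ matrix entries and composing with the inverse-square-root factors; this is the step the authors flag as potentially loose.

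Finally I would convert this operator-norm perturbation into the stated eigenvalue bound. Since $A$ and $\widetilde{A}$ are Hermitian and differ by $O(D^4\chi)$ in norm, Weyl-type perturbation theory controls the eigenvalue shifts; to isolate the lowest eigenvalue rather than a bulk estimate, the gap condition \eqref{gap_condition} guarantees that $E_0'$ stays separated from $E_1'$ under the perturbation, so the lowest eigenvalue maps to the lowest eigenvalue and does not mix with the others. Expressing the result in the chordal metric natural to definite pencils then yields $|\arctan(E_0')-\arctan(\widetilde{E}_0')|\le\arcsin(\kappa D^4\chi)$, with $\kappa$ the local condition number of $\arctan(E_0')$ capturing the sensitivity of the transformation at the relevant eigenvalue. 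The main obstacle is the middle step: stably whitening the pencil when $\textbf{S}$ has retained eigenvalues as small as $\epsilon$, since without assumption 2 the inverse-square-root factors would amplify the noise uncontrollably. Tracking the interplay of $\eta_S$, $\eta_H$, $\epsilon$, and $\alpha$ to land on exactly the combination $\chi$ of \eqref{chi_def}, and honestly accounting for the dimension factor $D^4$, is the delicate part of the argument.
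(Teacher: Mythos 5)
You should first be aware that the paper does not prove this lemma at all: it is Theorem 2.7 of the cited reference \cite{epperly2021subspacediagonalization}, restated in the paper's notation and invoked as a black box. The only piece of reasoning the paper itself supplies is the bracketed remark at the end, namely that the original theorem's additional hypothesis $\left(1+\frac{1}{\rho}\right)\eta_S\le\epsilon$ is implied by \eqref{epsilon_lower_bound}; this follows because $D\ge1$, $(\|\textbf{S}\|/\epsilon)^\alpha\ge1$ (since $\|\textbf{S}\|\ge\lambda_m>\epsilon$), and $3(2+\mu)>1$, so $D^4\chi\ge\left(1+\frac{1}{\rho}\right)\eta_S$. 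So the relevant benchmark for your attempt is the proof in that reference, not anything in this paper.

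Measured against that, your sketch captures the right skeleton --- definite-pencil perturbation theory phrased in the chordal ($\arctan$/$\arcsin$) metric, whitening by $\textbf{S}^{-1/2}$ with assumption 2 taming the $\epsilon^{-1/2}$ amplification and producing the $(\|\textbf{S}\|/\epsilon)^\alpha$ factor in $\chi$, and the gap condition \eqref{gap_condition} preventing the lowest eigenvalue from mixing with the rest --- but it has a genuine gap. The noiseless and noisy thresholded problems live on \emph{different} subspaces: the span of eigenvectors of $\textbf{S}$ with eigenvalue above $\epsilon$ versus the span of eigenvectors of $\widetilde{\textbf{S}}$ above $\epsilon$. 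You assign to assumption 1 only the role of making these subspaces have equal dimension $m$, and then compare the whitened operators $A$ and $\widetilde{A}$ as if they acted on a common space. In fact one must control the \emph{rotation} between the two retained subspaces (a Davis--Kahan-type step, which is where the margin $\rho$ of assumption 3 does real work), and then propagate that rotation through the ill-conditioned whitening; this is the technically hardest part of the argument in \cite{epperly2021subspacediagonalization}, and it is where the dimensional factors and the precise constant $3(2+\mu)\left(1+\frac{1}{\rho}\right)$ in \eqref{chi_def} arise. Without that step, the asserted bound $\|\widetilde{A}-A\|=O(D^4\chi)$ is not justified, and your attribution of the $D^4$ factor to ``summing over $O(D^2)$ entries'' is a guess rather than a derivation.
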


Our input Hamiltonian will be normalized such that its eigenvalues lie inside the range $[-1,1]$.
Therefore, the condition number $\kappa$ of $\arctan(E_0')$ is bounded as
\begin{widetext}
\begin{equation}
\begin{split}
    \min_{x\in[-1,1]}\left|\frac{x\arctan'(x)}{\arctan(x)}\right|&\le\kappa\le\max_{x\in[-1,1]}\left|\frac{x\arctan'(x)}{\arctan(x)}\right|\\
    \Rightarrow\quad\min_{x\in[-1,1]}\left|\frac{x}{(1+x^2)\arctan(x)}\right|&\le\kappa\le\max_{x\in[-1,1]}\left|\frac{x}{(1+x^2)\arctan(x)}\right|\\
    \Rightarrow\quad\left|\frac{1}{2\arctan(1)}\right|&\le\kappa\le\lim_{x\rightarrow0}\frac{x}{(1+x^2)\arctan(x)}=\lim_{x\rightarrow0}\frac{x}{\arctan(x)}\\
    \Rightarrow\quad\frac{2}{\pi}&\le\kappa\le1.
\end{split}
\end{equation}
\end{widetext}
Hence we can replace $\kappa$ by $1$ in \eqref{noisy_energy_bound}, and the resulting bound will be weaker but equivalent in scaling:
\begin{equation}
\label{noisy_energy_bound_2}
    \left|\arctan(E_0')-\arctan(\widetilde{E}_0')\right|\le\arcsin(D^4\chi),
\end{equation}
Also, for $x\in[-1,1]$, the minimum value of the first derivative of $\arctan(x)$ is $1/2$, so
\begin{equation}
    \frac{1}{2}|E_0'-\widetilde{E}_0'|\le\left|\arctan(E_0')-\arctan(\widetilde{E}_0')\right|.
\end{equation}
Finally, for $x\in[0,1]$, $\arcsin(x)\le\pi x/2$, so we can obtain a simplified energy bound of
\begin{equation}
\label{noisy_energy_bound_3}
    |E_0'-\widetilde{E}_0'|\le\pi D^4\chi.
\end{equation}

Similarly, since the maximum value of the first derivative of $\arctan(x)$ on $[-1,1]$ is $1$ and $\arcsin(x)\ge x$ on $[0,1]$, the condition \eqref{gap_condition} on the gap can be simplified to
\begin{equation}
\label{gap_condition_2}
    \Delta'\coloneqq E_1'-E_0'\ge\frac{D^4\chi}{\epsilon}.
\end{equation}
Hence we just need to choose
\begin{equation}
\label{epsilon_condition_1}
    \epsilon\ge\frac{D^4\chi}{\Delta'}
\end{equation}
and \eqref{epsilon_lower_bound} will also be guaranteed, so the only remaining assumption in the statement of \cref{noise_theorem} will be that \eqref{eig_separation_condition} holds.

Inserting the definition \eqref{chi_def} of $\chi$ into \eqref{epsilon_condition_1} yields
\begin{equation}
\label{epsilon_condition_2}
    \epsilon=\Theta\left(\left(\frac{D^4\eta}{\Delta'}\right)^{\frac{1}{1+\alpha}}\right).
\end{equation}
If we now insert the definition of $\chi$ into \eqref{noisy_energy_bound_3} and simplify within asymptotic notation, we can obtain
\begin{equation}
\label{noisy_energy_bound_4}
    |E_0'-\widetilde{E}_0'|\le O\left(\frac{D^4\eta}{\epsilon^\alpha}\right)=O\left((D^4\eta)^{\frac{1}{1+\alpha}}(\Delta')^{\frac{\alpha}{1+\alpha}}\right),
\end{equation}
where the second step uses \eqref{epsilon_condition_2}.
Since $\Delta'$ will be smaller than 1 except possibly at the lowest Krylov space dimensions, we can obtain a weaker upper bound by simply dropping the factor that depends on $\Delta'$, obtaining
\begin{equation}
\label{noisy_energy_bound_5}
    |E_0'-\widetilde{E}_0'|\le O\left((D^4\eta)^{\frac{1}{1+\alpha}}\right);
\end{equation}
in practice, the inclusion of $\Delta'$ in the denominator of the lower bound to $\epsilon$ as in \eqref{epsilon_condition_2} is probably unrealistic anyway, so we are likely not sacrificing any tightness of the bound by making this simplification.
Inserting \eqref{noisy_energy_bound_5} in \eqref{first_step_bound} provides the first term in \eqref{total_error_bound_app}, while \eqref{epsilon_condition_2} yields the condition \eqref{threshold_condition_app}.

The remaining terms in \eqref{total_error_bound_app}, coming from the second term in \eqref{first_step_bound}, are provided by \eqref{thresholded_energy_bound}, which is the result of \cref{threshold_theorem}, below.
When inserting this into \eqref{first_step_bound}, we assume that the threshold $\epsilon$ is small enough that the denominator in \eqref{thresholded_energy_bound} scales as $\Theta(|\gamma_0|^2)$, allowing us to ignore the second term the denominator in the asymptotic scaling.
This is why the statement of \cref{error_theorem} only claims that if the noise rate is sufficiently small then a threshold exists such that \eqref{total_error_bound_app} is satisfied.

\end{proof}

\setcounter{theorem}{1}
\begin{theorem}
\label{threshold_theorem}
    Instate the prevailing notation.
    The error in the noiseless thresholded ground state energy estimate $E_0'$ is bounded as
    \begin{equation}
    \label{thresholded_energy_bound}
    \begin{split}
        0&\le E_0'-E_0\\
        &\le\delta+8\frac{\sqrt{\delta}\,\epsilon_\text{total}+\left(1-\gamma^2+4\epsilon_\text{total}\right)\left(1+\frac{\delta}{2}\right)^{-2\lfloor k/2\rfloor}}{\left(|\gamma_0|-2\sqrt{(k+1)\epsilon}\right)^2},
    \end{split}
    \end{equation}
    where
    \begin{equation}
        \gamma\coloneqq\sqrt{\sum_\text{$i$ s.t. $E_i-E_0\le\delta$}|\gamma_i|^2}
    \end{equation}
    is the overlap of the initial state with the subspace whose energy is within $\delta$ of $E_0$.
    [Note: this form of the bound only holds when ${|\gamma_0|>\sqrt{2D}\epsilon}$, and is only tight when the denominator in \eqref{thresholded_energy_bound} is $\Omega(|\gamma_0|^2)$.]
\end{theorem}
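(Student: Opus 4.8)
The plan is to prove the two inequalities in \eqref{thresholded_energy_bound} separately. The lower bound $0\le E_0'-E_0$ is immediate from the Rayleigh--Ritz variational principle: after thresholding, the reduced generalized eigenvalue problem on the kept subspace has lowest eigenvalue equal to the minimum of the Rayleigh quotient $\langle\phi|H|\phi\rangle/\langle\phi|\phi\rangle$ over states $|\phi\rangle$ in that subspace of $\mathcal{H}_s$, and every such quotient is at least the smallest eigenvalue $E_0$ of $H$. The upper bound is the substantive part, and I would obtain it by exhibiting a single explicit trial state in the kept Krylov subspace whose energy is close to $E_0$, then invoking the same variational principle in the opposite direction.

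For the trial state I would take $|v\rangle=q(H)|\psi_0\rangle$, where $q$ is a polynomial of degree $m=\lfloor k/2\rfloor$ chosen, in the standard Kaniel--Paige--Saad fashion, to be large on the low-energy window $[E_0,E_0+\delta]$ and exponentially small above it. A suitably shifted and scaled Chebyshev polynomial of the first kind, mapping the region to be suppressed into $[-1,1]$, accomplishes this, and its growth outside that region yields an exponential convergence factor; fixing the scaling so that the per-degree suppression rate is $1+\delta/2$ reproduces the factor $(1+\delta/2)^{-2\lfloor k/2\rfloor}$ once the polynomial is squared in the Rayleigh quotient. The reason the usable degree is only $\lfloor k/2\rfloor$ rather than $k$ is the Chebyshev product identity \eqref{chebyshev_product}: evaluating the energy of $q(H)|\psi_0\rangle$ requires expectation values of $T_i(H)T_j(H)$ with $i,j\le m$, which by \eqref{chebyshev_product} reach degree $2m$, and for these to remain inside the degree-$k$ Krylov data one needs $2m\le k$.

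Next I would bound the Rayleigh quotient of $|v\rangle$ by splitting the spectral sum into the window $\{i:E_i-E_0\le\delta\}$, which contributes energy at most $E_0+\delta$ and carries total weight $\gamma^2$, and its complement, whose weight $1-\gamma^2$ is multiplied by the Chebyshev suppression factor; this is where the $\delta$ term and the $(1-\gamma^2)(1+\delta/2)^{-2\lfloor k/2\rfloor}$ contribution arise. The remaining task, which I expect to be the main obstacle, is to account for thresholding: the trial state lives in the exact Krylov space, but $E_0'$ is the minimum over the \emph{kept} subspace only, so I must project $|v\rangle$ onto that subspace and control the perturbation to both numerator and denominator of its Rayleigh quotient. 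I would bound the component of $|v\rangle$ in the discarded eigenspace of $\textbf{S}$ through the discarded eigenvalues: since each discarded $\textbf{S}$-eigenvalue is at most $\epsilon$ and there are at most $k+1$ of them, the lost overlap with $|\psi_0\rangle$ is at most $2\sqrt{(k+1)\epsilon}$, reducing the effective ground-state overlap from $|\gamma_0|$ to $|\gamma_0|-2\sqrt{(k+1)\epsilon}$ in the denominator, while the summed discarded weight $\epsilon_\text{total}$ enters the numerator through the $\sqrt{\delta}\,\epsilon_\text{total}$ and $4\epsilon_\text{total}$ corrections.

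The crux of the adaptation from Theorem 3.1 of \cite{epperly2021subspacediagonalization} is exactly this thresholding bookkeeping: their argument is tailored to Krylov spaces generated by real time evolutions, where the relevant polynomials are trigonometric and the Gram structure is Fourier/Toeplitz, whereas here I must redo the estimates for genuine polynomials using the Chebyshev product identity and the fact that $|v\rangle$ is built only from $T_0(H)|\psi_0\rangle,\dots,T_{\lfloor k/2\rfloor}(H)|\psi_0\rangle$. I would track carefully that the lower bound imposed on $|\gamma_0|$ in the statement is exactly what keeps the denominator positive, and that the bound is only tight when that denominator is $\Omega(|\gamma_0|^2)$, matching the caveat. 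Assembling the window split, the Chebyshev suppression, and the projection error, and absorbing the Cauchy--Schwarz slack into the overall factor of $8$, should reproduce \eqref{thresholded_energy_bound}.
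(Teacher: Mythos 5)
Your skeleton (Rayleigh--Ritz with an explicit trial polynomial, splitting the spectrum into a $\delta$-window and its complement, Chebyshev suppression, thresholding bookkeeping) is the same as the paper's, but the trial polynomial you chose breaks the one step you yourself identify as the crux. The thresholding error cannot be controlled by counting discarded eigenvalues: the trial state is $\textbf{K}c$, where $c$ is its coefficient vector in the basis $\{T_j(H)|\psi_0\rangle\}$, and its projection onto the discarded singular subspace of $\textbf{K}$ is bounded only by $\sqrt{\epsilon}\,\|c\|$ --- so everything hinges on bounding the Chebyshev-basis coefficients of your polynomial. A one-sided Kaniel--Paige--Saad polynomial normalized to $1$ at $E_0$ is exponentially large (in the degree) on the part of $[-1,1]$ lying \emph{below} $E_0$; this is harmless for the Rayleigh quotient, since there is no spectrum there, but it makes the Chebyshev coefficients, hence $\|c\|$, exponentially large, and your claimed overlap loss of $2\sqrt{(k+1)\epsilon}$ (``since each discarded eigenvalue is at most $\epsilon$ and there are at most $k+1$ of them'') has no valid derivation. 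The paper's key idea, which your proposal is missing, is to take the trial polynomial $f(x)=p^*\bigl((x-E_0)^2/4\bigr)$, where $p^*$ is the degree-$\lfloor k/2\rfloor$ residual polynomial of \cref{poly_approx_lemma}: the quadratic composition makes $|f|\le 1$ on \emph{all} of $[-1,1]$, so Parseval with the Chebyshev weight gives $\sum_j c_j^2\le 2\sqrt{\delta}+O\bigl((1+\delta/2)^{-2\lfloor k/2\rfloor}\bigr)$ and $|c_j|\le 2$ [see \eqref{coefficient_bound}]. These coefficient bounds, combined with Cauchy--Schwarz over the $k+1$ coefficients and the spectral-norm bound $\|\textbf{K}'-\textbf{K}\|\le\sqrt{\epsilon}$ weighted by $\boldsymbol\Gamma$ [see \eqref{spectral_norm_bound}], are precisely where the $\sqrt{\delta}\,\epsilon_\text{total}$ and $4\epsilon_\text{total}$ terms in the numerator and the $(k+1)$ inside the denominator of \eqref{thresholded_energy_bound} come from; without them the bound does not assemble.

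Separately, your explanation of the degree restriction is incorrect. The noiseless thresholded problem is a variational problem over the span of $T_0(H)|\psi_0\rangle,\dots,T_k(H)|\psi_0\rangle$, which contains every state $q(H)|\psi_0\rangle$ with $\deg q\le k$; no ``data availability'' constraint from the product identity \eqref{chebyshev_product} applies to this convergence analysis (that identity concerns how the algorithm \emph{estimates} matrix elements, not what the Krylov space contains). The $\lfloor k/2\rfloor$ arises because the paper's residual polynomial $p^*$ is composed with the quadratic $(x-E_0)^2/4$, so the actual trial polynomial has degree $2\lfloor k/2\rfloor\le k$: full degree is used, but half of it is spent making the polynomial two-sided (uniformly bounded below $E_0$ as well as above), which is exactly what rescues the thresholding step described above.
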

\begin{proof}

Let $\textbf{K}$ be the \emph{Krylov matrix} whose columns are the Krylov vectors, i.e.,
\begin{equation}
    \textbf{K}=\Big[|\psi_0\rangle\quad T_1(H)|\psi_0\rangle\quad T_2(H)|\psi_0\rangle~\cdots~T_k(H)|\psi_0\rangle\Big],
\end{equation}
where
\begin{equation}
    k\coloneqq D-1.
\end{equation}
The projected Hamiltonian $\textbf{H}$ and overlap matrix $\textbf{S}$ are given by
\begin{equation}
    \textbf{H}=\textbf{K}^\dagger H\textbf{K},\quad\textbf{S}=\textbf{K}^\dagger\textbf{K}.
\end{equation}
Hence, the eigenvalues of $\textbf{S}$ are the squares of the singular values of $\textbf{K}$, and the corresponding eigenvectors of $\textbf{S}$ are the right singular vectors of $\textbf{K}$.
Therefore, applying thresholding that removes the eigenspaces of $\textbf{S}$ with eigenvalues smaller than $\epsilon$ is equivalent to truncating the Krylov space to the span of the left singular vectors (which are states in the Hilbert space) of $\textbf{K}$ with singular values at least $\sqrt{\epsilon}$.

We can factor $\textbf{K}$ as follows:
\begin{widetext}
\begin{equation}
    \textbf{K}=\underbrace{\Big[|E_0\rangle\quad |E_1\rangle\quad |E_2\rangle~\cdots~|E_{N-1}\rangle\Big]}_{\boldsymbol\Psi}
    \underbrace{
    \begin{bmatrix}
        \gamma_0&&&& \\
        & \gamma_1 &&& \\
        && \gamma_2 && \\
        &&& \ddots & \\
        &&&& \gamma_{N-1} \\
    \end{bmatrix}
    }_{\boldsymbol\Gamma}
    \underbrace{
    \begin{bmatrix}
        1 & T_1(E_0) & \cdots & T_k(E_0) \\[1mm]
        1 & T_1(E_1) & \cdots & T_k(E_1) \\[1mm]
        1 & T_1(E_2) & \cdots & T_k(E_2) \\\
        \vdots & \vdots & \ddots & \vdots \\[0.6mm]
        1 & T_1(E_{N-1}) & \cdots & T_k(E_{N-1})
    \end{bmatrix}
    }_{\textbf{F}}.
\end{equation}
\end{widetext}
Let $\textbf{K}'=\boldsymbol\Psi\boldsymbol\Gamma\textbf{F}'$ denote the matrix obtained by setting all singular values of $\textbf{K}$ smaller than $\sqrt{\epsilon}$ to zero.

Let $R'$ be the range (column space) of $\textbf{K}'$, i.e., the thresholded Krylov space, spanned by the left singular vectors of $\textbf{K}$ with singular values at least $\sqrt{\epsilon}$.
Then by the Rayleigh-Ritz variational principle, the error in the ground state energy estimate in the thresholded Krylov space
\begin{equation}
\label{rayleigh_ritz}
    E_0'-E_0=\min_{|\psi\rangle\in R'\setminus\{\textbf{0}\}}\frac{\langle\psi|(H-E_0)|\psi\rangle}{\langle\psi|\psi\rangle}.
\end{equation}
We will upper bound the above minimization by inserting a particular state $|\psi\rangle\in R'$ into its argument.

To construct this state, we use the polynomial $p^*$ given in \cref{poly_approx_lemma}, below, with the parameters $a,b,d$ in \cref{poly_approx_lemma} set to $a=\delta^2/4$, $b=1$, and $d=\lfloor k/2\rfloor$: in terms of this, define the function $f$ as
\begin{equation}
    f(x)\coloneqq p^*\left(\frac{(x-E_0)^2}{4}\right).
\end{equation}
By \cref{poly_approx_lemma},
\begin{equation}
\label{poly_zero_val}
    f(E_0)=p^*(0)=1,
\end{equation}
but for any $x\in[-1,1]$ such that ${|x-E_0|\ge\delta}$,
\begin{equation}
\label{poly_bound}
    |f(x)|\le2\left(1+\sqrt{\frac{a}{b}}\right)^{-d}=2\left(1+\frac{\delta}{2}\right)^{-\lfloor k/2\rfloor}.
\end{equation}
Note that this bound is only tight when it is much less than $1$.
Since $p^*$ has degree $\lfloor k/2\rfloor$, $f(x)$ has degree either $k-1$ or $k$.
Let $c_j$ be the coefficients of the Chebyshev expansion of $f$:
\begin{equation}
\label{poly_def}
    f(x)=\sum_{j=0}^{k}c_jT_j(x).
\end{equation}
Since ${|p^*(x)|\le1}$ for ${0\le x\le 1}$, ${|f(x)|\le1}$ for ${-1\le x\le1}$, and for $x$ outside the interval ${[E_0-\delta,E_0+\delta]}$, the much tighter bound \eqref{poly_bound} holds.
Therefore,
\begin{equation}
\begin{split}
    &\int_{-1}^1dx\frac{f^2(x)}{\sqrt{1-x^2}}\\
    &\quad\le\int_{E_0-\delta}^{E_0+\delta}dx\frac{1}{\sqrt{1-x^2}}\\
    &\qquad+\int_{[-1,1]\setminus[E_0-\delta,E_0+\delta]}dx\frac{4\left(1+\frac{\delta}{2}\right)^{-2\lfloor k/2\rfloor}}{\sqrt{1-x^2}}\\
    &\quad\le\int_{1-2\delta}^{1}dx\frac{1}{\sqrt{1-x^2}}+\int_{-1}^{1-2\delta}dx\frac{4\left(1+\frac{\delta}{2}\right)^{-2\lfloor k/2\rfloor}}{\sqrt{1-x^2}}\\
    &\quad\le\cos^{-1}(1-2\delta)\\
    &\qquad+4\left(\pi-\cos^{-1}(1-2\delta)\right)\left(1+\frac{\delta}{2}\right)^{-2\lfloor k/2\rfloor}\\
    &\quad\le\pi\sqrt{\delta}+4(\pi-2\sqrt{\delta})\left(1+\frac{\delta}{2}\right)^{-2\lfloor k/2\rfloor},
\end{split}
\end{equation}
which holds for $\delta<1/4$, so since
\begin{equation}
    \int_{-1}^1dx\frac{T_i(x)T_j(x)}{\sqrt{1-x^2}}=
    \begin{cases}
        \pi\quad\text{if $i=j=0$},\\
        \pi/2\quad\text{if $i=j\neq0$},\\
        0\quad\text{if $i\neq j$},
    \end{cases}
\end{equation}
we obtain the following bound on the coefficients $c_j$:
\begin{equation}
\label{coefficient_bound}
\begin{split}
    \sum_{j=0}^kc_j^2&\le c_0^2+\sum_{j=0}^kc_j^2=\frac{2}{\pi}\int_{-1}^1dx\frac{f^2(x)}{\sqrt{1-x^2}}\\
    &\le2\sqrt{\delta}+8\left(1-\frac{2}{\pi}\sqrt{\delta}\right)\left(1+\frac{\delta}{2}\right)^{-2\lfloor k/2\rfloor}\\
    &\coloneqq g(\delta,k).
\end{split}
\end{equation}
The above is really just an application of Parseval's Theorem, using the fact that the Chebyshev polynomials are orthogonal on $[-1,1]$ with respect to the measure ${dx/\sqrt{1-x^2}}$.
As noted after \eqref{poly_bound}, this upper bound $g(\delta,k)$ is only tight when its second term is much smaller than $1$, which will always hold in our regime of interest, since as we will see, this second term is proportional to the main energy error term we will end up with.

Let $|\psi\rangle$ be defined in terms of the coefficients $c_j$ as
\begin{equation}
    |\psi\rangle=\sum_{j=0}^{k}c_j\sum_{i=0}^{N-1}\gamma_i\textbf{F}'_{ij}|E_i\rangle\in R'
\end{equation}
(note that this state is in $R'$ because $\sum_{i=0}^{N-1}\gamma_i\textbf{F}'_{ij}|E_i\rangle$ is the $j$th column in $\textbf{K}'$.)
Inserting this into \eqref{rayleigh_ritz} yields
\begin{equation}
\label{energy_bound_1}
    E_0'-E_0\le\frac{\sum_{i=1}^{N-1}|\gamma_i|^2\left|\sum_{j=0}^{k}c_j\textbf{F}'_{ij}\right|^2(E_i-E_0)}{\sum_{i=0}^{N-1}|\gamma_i|^2\left|\sum_{j=0}^{k}c_j\textbf{F}'_{ij}\right|^2}.
\end{equation}

Let $I$ be the least value of $i$ such that $E_I-E_0>\delta$.
We separately upper bound the parts of \eqref{energy_bound_1} coming from terms in the numerator with $i\ge I$ and $i<I$.
We start with the former, i.e., let $i\ge I$ and consider the term 
\begin{equation}
\label{term_outside_subspace}
\begin{split}
    &\frac{|\gamma_i|^2\left|\sum_{j=0}^{k}c_j\textbf{F}'_{ij}\right|^2(E_i-E_0)}{\sum_{l=0}^{N-1}|\gamma_l|^2\left|\sum_{j=0}^{k}c_j\textbf{F}'_{lj}\right|^2}\\
    &\le\frac{|\gamma_i|^2\left|\sum_{j=0}^{k}c_j\textbf{F}'_{ij}\right|^2(E_i-E_0)}{|\gamma_0|^2\left|\sum_{j=0}^{k}c_j\textbf{F}'_{0j}\right|^2}.
\end{split}
\end{equation}
We upper bound the numerator as follows, using \eqref{poly_bound} and \eqref{coefficient_bound}:
\begin{equation}
\label{numer_upper_1}
\begin{split}
    &\left|\sum_{j=0}^{k}c_j\textbf{F}'_{ij}\right|^2=\left|\sum_{j=0}^{k}c_j(\textbf{F}'_{ij}-T_j(E_i))+\sum_{j=0}^{k}c_jT_j(E_i)\right|^2\\
    &=\left|\sum_{j=0}^{k}c_j(\textbf{F}'_{ij}-T_j(E_i))+f(E_i)\right|^2\\
    &\le2\left(\sum_{j=0}^{k}|c_j|^2\right)\left(\sum_{j=0}^{k}|\textbf{F}'_{ij}-T_j(E_i)|^2\right)+2\left|f(E_i)\right|^2\\
    &\le2g(\delta,k)\left(\sum_{j=0}^{k}\alpha_{ij}^2\right)+4\left(1+\frac{\delta}{2}\right)^{-2\lfloor k/2\rfloor},
\end{split}
\end{equation}
where in the last line we defined
\begin{equation}
\label{alpha_def}
    \alpha_{ij}\coloneqq|\textbf{F}'_{ij}-T_j(E_i)|.
\end{equation}
For the denominator of this term, we use
\begin{equation}
\label{denom_lower_1}
\begin{split}
    \left|\sum_{j=0}^{k}c_j\textbf{F}'_{0j}\right|&=\left|\sum_{j=0}^{k}c_jT_j(E_0)+\sum_{j=0}^{k}c_j(\textbf{F}'_{0j}-T_j(E_0))\right|\\
    &\ge\left|f(E_0)\right|-\left|\sum_{j=0}^{k}c_j(\textbf{F}'_{0j}-T_j(E_0))\right|\\
    &\ge1-\sum_{j=0}^{k}|c_j|\left|\textbf{F}'_{0j}-T_j(E_0)\right|\\
    &\ge1-2\sum_{j=0}^{k}\alpha_{0j},
\end{split}
\end{equation}
where the second line follows from the reverse triangle inequality and \eqref{poly_def}, the third line follows from the triangle inequality and \eqref{poly_zero_val}, and the last line follows from the fact that each $|c_j|\le2$ (obtained by evaluating the integral in the first line of \eqref{coefficient_bound} replacing $f^2(x)$ with its upper bound of $1$).
Extending the above bound,
\begin{equation}
\label{denom_lower}
\begin{split}
    \left|\sum_{j=0}^{k}c_j\textbf{F}'_{0j}\right|&\ge1-2\sqrt{(k+1)\sum_{j=0}^{k}\alpha_{0j}^2}\\
    &\ge1-\frac{2}{|\gamma_0|}\sqrt{(k+1)\epsilon},
\end{split}
\end{equation}
using the spectral norm bound
\begin{equation}
\label{spectral_norm_bound}
\begin{split}
    \sqrt{\epsilon}&\ge\|\textbf{K}'-\textbf{K}\|=\|\boldsymbol\Gamma(\textbf{F}'-\textbf{F})\|\\
    &\ge\sqrt{|\gamma_0|^2\sum_{j=0}^k|\textbf{F}'_{0j}-T_j(E_0)|^2}=|\gamma_0|\sqrt{\sum_{j=0}^k\alpha_{0j}^2}.
\end{split}
\end{equation}
Hence, the term \eqref{term_outside_subspace} is upper bounded by
\begin{equation}
\label{term_outside_subspace_upper_bound}
\begin{split}
    &\frac{2|\gamma_i|^2(E_i-E_0)\left(g(\delta,k)\sum_{j=0}^{k}\alpha_{ij}^2+2\left(1+\frac{\delta}{2}\right)^{-2\lfloor k/2\rfloor}\right)}{\left(|\gamma_0|-2\sqrt{(k+1)\epsilon}\right)^2}\\
    &\le\frac{4|\gamma_i|^2\left(g(\delta,k)\sum_{j=0}^{k}\alpha_{ij}^2+2\left(1+\frac{\delta}{2}\right)^{-2\lfloor k/2\rfloor}\right)}{\left(|\gamma_0|-2\sqrt{(k+1)\epsilon}\right)^2}.
\end{split}
\end{equation}

Now consider together all of the terms in \eqref{energy_bound_1} with $i<I$:
\begin{equation}
\label{terms_inside_subspace}
\begin{split}
    &\frac{\sum_{i=0}^{I-1}|\gamma_i|^2\left|\sum_{j=0}^{k}c_j\textbf{F}'_{ij}\right|^2(E_i-E_0)}{\sum_{l=0}^{N-1}|\gamma_l|^2\left|\sum_{j=0}^{k}c_j\textbf{F}'_{lj}\right|^2}\\
    &\le\frac{\sum_{i=0}^{I-1}|\gamma_i|^2\left|\sum_{j=0}^{k}c_j\textbf{F}'_{ij}\right|^2(E_i-E_0)}{\sum_{i=0}^{I-1}|\gamma_i|^2\left|\sum_{j=0}^{k}c_j\textbf{F}'_{ij}\right|^2}.
\end{split}
\end{equation}
The right-hand side of \eqref{terms_inside_subspace} is an expectation value of the difference between energy and ground state energy for some state in the subspace whose energies are within $\delta$ of the ground state energy.
Hence by the Rayleigh-Ritz variational principle, \eqref{terms_inside_subspace} is upper bounded by $\delta$:
\begin{equation}
\label{terms_inside_subspace_upper_bound}
    \frac{\sum_{i=0}^{I-1}|\gamma_i|^2\left|\sum_{j=0}^{k}c_j\textbf{F}'_{ij}\right|^2(E_i-E_0)}{\sum_{l=0}^{N-1}|\gamma_l|^2\left|\sum_{j=0}^{k}c_j\textbf{F}'_{lj}\right|^2}\le\delta.
\end{equation}

Inserting \eqref{term_outside_subspace_upper_bound} and \eqref{terms_inside_subspace_upper_bound} into \eqref{energy_bound_1}, we obtain
\begin{equation}
\label{energy_bound_2}
\begin{split}
    &E_0'-E_0\\
    &\le\delta+\sum_{i=I}^{N-1}\frac{4|\gamma_i|^2\left(g(\delta,k)\sum_{j=0}^{k}\alpha_{ij}^2+2\left(1+\frac{\delta}{2}\right)^{-2\lfloor k/2\rfloor}\right)}{\left(|\gamma_0|-2\sqrt{(k+1)\epsilon}\right)^2}\\
    &=\delta+4\frac{g(\delta,k)\epsilon_\text{total}+2(1-\gamma^2)\left(1+\frac{\delta}{2}\right)^{-2\lfloor k/2\rfloor}}{\left(|\gamma_0|-2\sqrt{(k+1)\epsilon}\right)^2}\\
    &\le\delta+8\frac{\sqrt{\delta}\,\epsilon_\text{total}+\left(1-\gamma^2+4\epsilon_\text{total}\right)\left(1+\frac{\delta}{2}\right)^{-2\lfloor k/2\rfloor}}{\left(|\gamma_0|-2\sqrt{(k+1)\epsilon}\right)^2},
\end{split}
\end{equation}
our desired result, where the last line is obtained by inserting the definition of $g(\delta,k)$ as in \eqref{coefficient_bound} and upper bounding $1-\frac{2}{\pi}\sqrt{\delta}\le1$, and in the second line we inserted 
\begin{equation}
    \gamma\coloneqq\sqrt{\sum_{i=0}^{I-1}|\gamma_i|^2}=\sqrt{\sum_\text{$i$ s.t. $E_i-E_0\le\delta$}|\gamma_i|^2}
\end{equation}
and $\epsilon_\text{total}$, which we define to be the squared Frobenius distance from $\textbf{K}$ to $\textbf{K}'$:
\begin{equation}
\label{eps_total}
    \epsilon_\text{total}\coloneqq\|\textbf{K}'-\textbf{K}\|^2_F=\|\boldsymbol\Gamma(\textbf{F}'-\textbf{F})\|^2_F=\sum_{i=0}^{N-1}\sum_{j=0}^{k}|\gamma_i|^2\alpha_{ij}^2
\end{equation}
(the second equality follows because $\boldsymbol\Psi$ is unitary.)
Since $\textbf{K}'-\textbf{K}$ is a matrix whose only nonzero singular values are those singular values of $\textbf{K}$ that are smaller than $\sqrt{\epsilon}$ and are thus dropped in thresholding, and the Frobenius norm may also be written
\begin{equation}
    \epsilon_\text{total}=\|\textbf{K}'-\textbf{K}\|^2_F=\text{Tr}\Big((\textbf{K}'-\textbf{K})^\dagger(\textbf{K}'-\textbf{K})\Big),
\end{equation}
$\epsilon_\text{total}$ as defined by \eqref{eps_total} is equal to the sum of squares of the singular values of $\textbf{K}$ that are discarded in thresholding.
Since $\textbf{S}=\textbf{K}^\dagger\textbf{K}$, $\epsilon_\text{total}$ is the sum of eigenvalues of $\textbf{S}$ that are discarded in thresholding, as claimed in the theorem statement.

\end{proof}

\begin{lemma}[\cite{bjorck2014numerical},Theorem 4.1.11]
\label{poly_approx_lemma}
    Let $0<a<b$ and let $\Pi_d^*$ be the space of residual polynomials (polynomials whose value at $0$ is $1$) of degree at most $d$.
    The solution to
    \begin{equation}
        \beta(a,b,d)=\min_{p\in\Pi_d^*}\max_{x\in[a,b]}|p(x)|
    \end{equation}
    is
    \begin{equation}
    \label{optimal_poly}
        p^*(x)=\frac{T_d\left(\frac{b+a-2x}{b-a}\right)}{T_d\left(\frac{b+a}{b-a}\right)},
    \end{equation}
    and the corresponding minimal value is
    \begin{equation}
    \label{optimal_val}
    \begin{split}
        \beta(a,b,d)&=T_d^{-1}\left(\frac{b+a}{b-a}\right)\\
        &\le2\left(1+\sqrt{\frac{a}{b}}\right)^{-d}.
    \end{split}
    \end{equation}
    Note that the upper bound is only tight when it is much less than $1$.
\end{lemma}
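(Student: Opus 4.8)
The plan is to prove the three claims of the lemma in order: that the displayed $p^*$ is a residual polynomial with the stated sup-norm on $[a,b]$; that it is optimal; and that the optimal value obeys the exponential upper bound. The only nontrivial ingredients are the equioscillation property of $T_d$ on $[-1,1]$ and the standard exponential estimate for $T_d$ outside that interval.

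First I would introduce the affine change of variables $u(x)=\tfrac{b+a-2x}{b-a}$, which maps $[a,b]$ bijectively onto $[-1,1]$ (with $u(a)=1$, $u(b)=-1$) and sends $x=0$ to $\tfrac{b+a}{b-a}>1$. Substituting $x=0$ into \eqref{optimal_poly} immediately gives $p^*(0)=1$, so $p^*\in\Pi_d^*$. Since $T_d$ oscillates between $-1$ and $1$ on $[-1,1]$ while $T_d\big(\tfrac{b+a}{b-a}\big)>1$ is a fixed positive constant, the composition gives $\max_{x\in[a,b]}|p^*(x)|=1/T_d\big(\tfrac{b+a}{b-a}\big)=:\beta$, which is the candidate for $\beta(a,b,d)$.

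The heart of the argument is optimality, for which I would use equioscillation. The polynomial $T_d$ attains $\pm1$ at the $d+1$ points $\cos(j\pi/d)$, $j=0,\dots,d$, with signs alternating as $(-1)^j$; pulling these back through $u^{-1}$ yields $d+1$ distinct points $x_0,\dots,x_d\in[a,b]$ with $p^*(x_j)=(-1)^j\beta$. Suppose for contradiction that some $q\in\Pi_d^*$ satisfied $\max_{[a,b]}|q|<\beta$. Then at each $x_j$ we have $|q(x_j)|<\beta=|p^*(x_j)|$, so $p^*-q$ inherits the sign of $p^*(x_j)$ and therefore alternates sign across the $x_j$; this forces at least $d$ roots of $p^*-q$ in $(a,b)$. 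But $p^*(0)=q(0)=1$ supplies an additional root at $x=0\notin[a,b]$, giving $d+1$ distinct roots of a polynomial of degree at most $d$, hence $p^*-q\equiv0$, contradicting the strict inequality. This establishes $\beta(a,b,d)=\beta=1/T_d\big(\tfrac{b+a}{b-a}\big)$.

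For the final bound I would invoke $T_d(x)=\cosh\!\big(d\,\mathrm{arccosh}\,x\big)\ge\tfrac12\big(x+\sqrt{x^2-1}\big)^d$ for $x\ge1$. Taking $x=\tfrac{b+a}{b-a}$, a direct computation gives $\sqrt{x^2-1}=\tfrac{2\sqrt{ab}}{b-a}$ and hence $x+\sqrt{x^2-1}=\tfrac{\sqrt b+\sqrt a}{\sqrt b-\sqrt a}$, so that $\beta\le2\big(\tfrac{\sqrt b-\sqrt a}{\sqrt b+\sqrt a}\big)^d$; bounding $\sqrt b-\sqrt a\le\sqrt b$ in the numerator then yields the stated $2(1+\sqrt{a/b})^{-d}$. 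The main obstacle is the root-counting in the optimality step: one must check that the $d+1$ alternation points are distinct and interior so each sign change produces a genuine root, and that the root at $x=0$ is genuinely additional, which is exactly where the hypothesis $0<a$ (so $0\notin[a,b]$) is used. Everything else reduces to routine substitution and the exponential estimate for $T_d$.
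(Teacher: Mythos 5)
Your proposal is correct, and it actually proves strictly more than the paper does. The paper's proof of this lemma treats the form of the optimal polynomial \eqref{optimal_poly} as a black box, citing Theorem 4.1.11 of the referenced text, and only verifies two things: that $\max_{x\in[a,b]}|p^*(x)|=T_d^{-1}\bigl(\tfrac{b+a}{b-a}\bigr)$ (because the argument of $T_d$ sweeps $[-1,1]$ as $x$ ranges over $[a,b]$), and the exponential upper bound via the closed form $T_d(y)\ge\tfrac12\bigl(y+\sqrt{y^2-1}\bigr)^d$ followed by what it calls ``straightforward but tedious algebra.'' You instead make the lemma self-contained by supplying the optimality argument: the classical equioscillation/root-counting proof, in which the $d+1$ alternation points of $T_d$ pulled back to $[a,b]$, together with the shared root of $p^*-q$ at $x=0\notin[a,b]$ (this is exactly where $0<a$ enters, as you note), force any strictly better residual polynomial to coincide with $p^*$. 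Your handling of the final bound is also algebraically cleaner than the paper's: you factor $y+\sqrt{y^2-1}=\tfrac{\sqrt b+\sqrt a}{\sqrt b-\sqrt a}$ directly and then loosen $\sqrt b-\sqrt a\le\sqrt b$, whereas the paper separately discards a factor $\bigl(\tfrac{b-a}{b+a}\bigr)^d\le1$ and then bounds $\tfrac{2\sqrt{ab}}{a+b}\ge\sqrt{a/b}$; the two routes pass through the same exact intermediate quantity and land on the same result. The trade-off is length versus self-containment: the paper's citation-based proof is shorter and appropriate for an appendix lemma, while yours would stand on its own without the external reference.
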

\begin{proof}
    Eq.~\eqref{optimal_poly} is the result given in Theorem 4.1.11 in~\cite{bjorck2014numerical}.
    To obtain \eqref{optimal_val}, note that for $x\in[a,b]$,
    \begin{equation}
    \label{chebyshev_domain}
        -1\le\frac{b+a-2x}{b-a}\le1,
    \end{equation}
    so
    \begin{equation}
        \left|T_d\left(\frac{b+a-2x}{b-a}\right)\right|\le1
    \end{equation}
    with equality at either end of the domain \eqref{chebyshev_domain}.
    Hence
    \begin{equation}
    \begin{split}
        &\beta(a,b,d)=\min_{p\in\Pi_k^*}\max_{x\in[a,b]}|p(x)|\\
        &=\max_{x\in[a,b]}\left|\frac{T_d\left(\frac{b+a-2x}{b-a}\right)}{T_d\left(\frac{b+a}{b-a}\right)}\right|=T_d^{-1}\left(\frac{b+a}{b-a}\right)\\
        &\le2\left(\frac{b+a}{b-a}+\sqrt{\left(\frac{b+a}{b-a}\right)^2-1}\right)^{-d}\\
        &=2\left(\frac{b-a}{b+a}\right)^d\left(1+\sqrt{1-\left(\frac{b-a}{b+a}\right)^2}\right)^{-d}\\
        &\le2\left(1+\sqrt{1-\left(\frac{b-a}{b+a}\right)^2}\right)^{-d}\\
        &\le2\left(1+\sqrt{\frac{a}{b}}\right)^{-d},
    \end{split}
    \end{equation}
    where the last step follows after some straightforward but tedious algebra.
\end{proof}

\section{Block encoding based on binary representation of Pauli operators}
\label[appendix]{block_encoding_app}

\subsection{Block encoding components}

Recall that to implement a block encoding, we need a preparation procedure for the state
\begin{equation}
    |G\rangle_a=\sum_{i=0}^{T-1}\sqrt{\alpha_i}|i\rangle_a
\end{equation}
and an implementation of the unitary
\begin{equation}
\label{unitary_app}
    U=\sum_{i=0}^{T-1}|i\rangle_a\langle i|_a\otimes P_i.
\end{equation}
Suppose we let
\begin{equation}
    |i\rangle_a=|\vec{x},\vec{z}\rangle_a,
\end{equation}
where $(\vec{x},\vec{z})$ is the binary representation of $P_i$~\cite{gottesman97a,knill1997errorcorrecting}.
In this representation, both $\vec{x}$ and $\vec{z}$ are binary vectors, where a $1$ in the $j$th position in $\vec{x}$ indicates that the Pauli operator contains an $X$ on qubit $j$, and a $1$ in the $j$th position in $\vec{z}$ indicates that the Pauli operator contains an $Z$ on qubit $j$ (so presence of both indices $Y$).
In other words, the Pauli operator that corresponds to $(\vec{x},\vec{z})$ is
\begin{equation}
\label{symplectic_pauli}
    P(\vec{x},\vec{z})=\bigotimes_{j=0}^{n-1}i^{x_jz_j}X^{x_j}Z^{z_j}
\end{equation}
(the $i$ in the above expression is the imaginary number, not the index of the Pauli operator).

Hence, this block encoding requires $2n$ auxiliary qubits for $n$ system qubits, so it triples the qubit requirement compared to the system alone.
In exchange, the gate costs are substantially lower, depending on the specific Hamiltonian to be encoded, so this block encoding option may be advantageous on near-term quantum computers whose noisy gates make shorter depth and fewer controls a higher priority than fewer qubits.
We can now express $U$ as
\begin{equation}
    U=\sum_{\vec{x},\vec{z}}|\vec{x},\vec{z}\rangle_a\langle\vec{x},\vec{z}|_a\otimes P(\vec{x},\vec{z}).
\end{equation}
Inserting \eqref{symplectic_pauli}, we can factor this as
\begin{equation}
\label{U_decomp}
    U=\prod_{j=0}^{n-1}i^{x_jz_j}\big(\text{ctrl}_{x_j}\text{-}X_j\big)\big(\text{ctrl}_{z_j}\text{-}Z_j\big),
\end{equation}
so $U$ can be implemented as only three layers of singly-controlled operations: a first layer that applies $Z_j$ to each qubit controlled on the corresponding auxiliary qubit encoding $z_j$, a second layer that applies $X_j$ to each qubit controlled on the corresponding auxiliary qubit encoding $x_j$, and a third layer that implements the two-qubit controlled phase
\begin{equation}
    \begin{pmatrix}
        1&0&0&0\\
        0&1&0&0\\
        0&0&1&0\\
        0&0&0&i
    \end{pmatrix}
\end{equation}
on each pair of auxiliary qubits $x_j,z_j$.
A circuit diagram for $U$ is shown in Fig. 5.

The complexity of preparing
\begin{equation}
    |G\rangle_a=\sum_{i=0}^{T-1}\sqrt{\alpha_i}|(\vec{x},\vec{z})_i\rangle_a,
\end{equation}
where $(\vec{x},\vec{z})_i$ is the binary representation of the $i$th term in the Hamiltonian, depends strongly on form of the Hamiltonian.
For each term $P_i$ in the Hamiltonian, the locality $l$ of $P_i$ (i.e., the number of qubits it acts upon) determines the Hamming weight of $(\vec{x},\vec{z})_i$, since each of $\vec{x}$ and $\vec{z}$ can have Hamming weight at most $l$.
Hence, one case in which this encoding may be advantageous is when the input Pauli Hamiltonian is local, i.e., $l$ is constant.

For example, the $l=2$ case includes nearest-neighbor spin models.
In this case, the Hamiltonian is composed of terms of the form
\begin{equation}
    \alpha_1\sigma_j\quad\text{or}\quad\alpha_2\sigma_j\sigma_k,
\end{equation}
where $\alpha_1,\alpha_2$ are coefficients and
\begin{equation}
    \sigma_j(x_j,z_j)=i^{x_jz_j}X_j^{x_j}Z_j^{z_j}
\end{equation}
denotes a single-qubit Pauli operator acting on qubit $j$.
The corresponding terms in $|G\rangle_a$ will be
\begin{equation}
\label{spin_terms}
\begin{split}
    &\sqrt{\alpha_1}|00...x_j...0,00...z_j...0\rangle,\\
    &\sqrt{\alpha_2}|00...x_j...x_k...0,00...z_j...z_k...0\rangle,
\end{split}
\end{equation}
in which the only potentially nonzero bits $x_j,x_k,z_j,z_k$ are in positions $j$ and $k$ in their respective bitstrings.

In order to prepare such a state, it is enough if we can prepare an arbitrary real superposition of computational basis states with Hamming weight one or two.
For weight one, we can accomplish this as follows:
\begin{equation}
\label{weight_one_seq}
\begin{split}
    |000...00\rangle\xrightarrow{X_0}&|100...00\rangle\\
    \xrightarrow{\text{PSWAP}_{01}(\theta_1)}&|010...00\rangle\\
    \xrightarrow{\text{PSWAP}_{12}(\theta_2)}&|001...00\rangle\\
    \vdots\quad&\\
    \xrightarrow{\text{PSWAP}_{n-2,n-1}(\theta_{n-1})}&|000...01\rangle,
\end{split}
\end{equation}
where after each step we only show the newest state in the superposition, and $\text{PSWAP}_{ij}(\theta)$ is defined as the two-qubit gate
\begin{equation}
\label{pswap_def}
    \text{PSWAP}_{ij}(\theta)=
    \begin{pmatrix}
        1&0&0&0\\
        0&\cos\theta&-\sin\theta&0\\
        0&\sin\theta&\cos\theta&0\\
        0&0&0&1
    \end{pmatrix}
\end{equation}
acting on qubits $i$ and $j$.
If we desired to obtain the coefficient $\beta_i\ge0$ for the bitstring with the one in the $i$th position, then the angles in \eqref{weight_one_seq} should be chosen as follows:
\begin{equation}
\label{rotation_angle}
\begin{split}
    &\cos\theta_1=\beta_1~\Rightarrow~\theta_1=\arccos\beta_1,\\
    &\sqrt{1-\beta_1^2}\cos\theta_2=\beta_2~\Rightarrow~\theta_2=\arccos\left(\frac{\beta_2}{\sqrt{1-\beta_1^2}}\right),\\
    &\qquad\vdots\\
    &\sqrt{1-\sum_{j=1}^{i-1}\beta_j^2}\cos\theta_i=\beta_i\\
    &\qquad\Rightarrow~\theta_i=\arccos\left(\frac{\beta_i}{\sqrt{1-\sum_{j=1}^{i-1}\beta_j^2}}\right).
\end{split}
\end{equation}
This procedure requires only linear qubit connectivity.

We can prepare an arbitrary real superposition of weight-two bitstrings as follows:
\begin{equation}
\label{weight_two_seq}
\begin{split}
    |000...00\rangle\xrightarrow{X_0X_1}&|1100000...0\rangle\\
    \xrightarrow{\text{C$_0$-PSWAP}_{12}(\theta_1)}&|1010000...0\rangle\\
    \xrightarrow{\text{C$_2$-PSWAP}_{01}(\theta_2)}&|0110000...0\rangle\\
    \xrightarrow{\text{C$_2$-PSWAP}_{13}(\theta_3)}&|0011000...0\rangle\\
    \xrightarrow{\text{C$_3$-PSWAP}_{21}(\theta_4)}&|0101000...0\rangle\\
    \xrightarrow{\text{C$_3$-PSWAP}_{10}(\theta_5)}&|1001000...0\rangle\\
    \xrightarrow{\text{C$_0$-PSWAP}_{34}(\theta_6)}&|1000100...0\rangle\\
    \xrightarrow{\text{C$_4$-PSWAP}_{01}(\theta_7)}&|0100100...0\rangle\\
    \xrightarrow{\text{C$_4$-PSWAP}_{12}(\theta_8)}&|0010100...0\rangle\\
    \xrightarrow{\text{C$_4$-PSWAP}_{23}(\theta_9)}&|0001100...0\rangle\\
    \xrightarrow{\text{C$_4$-PSWAP}_{35}(\theta_10)}&|0000110...0\rangle\\
    \xrightarrow{\text{C$_5$-PSWAP}_{43}(\theta_11)}&|0001010...0\rangle\\
    \xrightarrow{\text{C$_5$-PSWAP}_{32}(\theta_12)}&|0010010...0\rangle\\
    \xrightarrow{\text{C$_5$-PSWAP}_{21}(\theta_13)}&|0100010...0\rangle\\
    \vdots\quad&
\end{split}
\end{equation}
where again the state shown after each step is only the latest state to be added to the superposition and one can check that each controlled partial swap acts nontrivially on only the state obtained in the preceding step.
The controlled partial swap $\text{C$_k$-PSWAP}_{ij}(\theta)$ is simply the partial swap defined in \eqref{pswap_def}, controlled on qubit $k$.
Hence, if $\beta_i$ is the desired coefficient of the $i$th bitstring obtained in the order in \eqref{weight_two_seq}, the associated angle $\theta_i$ is again simply given by \eqref{rotation_angle}.
The above procedure requires arbitrary qubit connectivity, but this can be changed to linear connectivity by adding $O(n)$ SWAPs per step in \eqref{weight_two_seq} for $n$ qubits.

In general, to prepare states containing terms of the forms \eqref{spin_terms}, we need to be able to prepare an arbitrary real superposition of terms of Hamming weight one and two.
Let $\beta^{(1)}_i$ be the desired coefficients for the weight-one terms and $\beta^{(2)}_i$ be the desired coefficients for the weight-two terms.
To prepare the desired state, first prepare the weight-one part, only with $\beta^{(1)}_0$ replaced by
\begin{equation}
    \sqrt{(\beta^{(1)}_0)^2+\sum_{i}(\beta^{(2)}_i)^2}
\end{equation}
so that in the final state the term $|100...0\rangle$ has the above coefficient.
Then apply the partial CNOT
\begin{equation}
\label{partial_cnot}
    \text{CNOT}(\theta)=
    \begin{pmatrix}
        1&0&0&0\\
        0&1&0&0\\
        0&0&\cos\theta&\sin\theta\\
        0&0&\sin\theta&-\cos\theta
    \end{pmatrix}
\end{equation}
to qubits $0$ and $1$ with the angle
\begin{equation}
    \theta=\arccos\left(\frac{\beta^{(1)}_0}{\sqrt{(\beta^{(1)}_0)^2+\sum_{i}(\beta^{(2)}_i)^2}}\right).
\end{equation}
This will map the $|100...0\rangle$ term to
\begin{equation}
    \beta^{(1)}_0|100...0\rangle+\sqrt{\sum_{i}(\beta^{(2)}_i)^2}|110...0\rangle.
\end{equation}
Finally, apply the procedure \eqref{weight_two_seq} to prepare the weight-two part of the superposition, skipping the initial $X_0X_1$ step.
Since all of the C-PSWAPs do not change states of weight one, this will not disturb the already prepared weight-one part of the superposition, but will simply distribute the amplitude $\sqrt{\sum_{i}(\beta^{(2)}_i)^2}$ placed on $|110...0\rangle$ over the other weight-two states, as desired.

The above procedures suffice to prepare any superpositions of weight-one and weight-two bitstrings, which capture all terms of the form \eqref{spin_terms} corresponding to $X$, $Z$, $XX$, $ZZ$, or $XZ$ terms in the Hamiltonian.
To include terms containing $Y$s, we need to obtain ones in the same locations in the $\vec{x}$ and $\vec{z}$ registers.
This can be accomplished by doubly- and triply-controlled partial CNOTs, starting from states prepared as above and using a similar strategy to distribute amplitude.

The one remaining piece we require for the block encoding is to implement the signs of the terms.
Recall that the unitary $U$ in \eqref{unitary_app} is supposed to implement the signed Pauli operators so that the coefficients $\alpha_i$ in the Hamiltonian can be nonnegative.
However, our implementation of $U$ above simply applied each Pauli without a sign.
To include the signs, we can first assume without loss of generality that all single-qubit terms in the Hamiltonian are $Z$s with positive signs (this amounts to a choice of $|0\rangle$ and $|1\rangle$ states for each qubit).
Hence, we only have to implement signs on the interaction terms.

If the only interaction terms have the form $XX$, $ZZ$, or $XZ$, then the sign for each term can be implemented as a controlled-$Z$ on the pair of auxiliary qubits that are one in the binary representation of the term.
This requires at most $T$ controlled-$Z$ gates for $T$ terms.
If terms containing $Y$s are also present, then these phases must be triply-controlled, i.e., four-qubit gates instead.
A circuit diagram for $G$ is given in Fig. 6.

\subsection{Reflection about block encoding state}

The above completes the construction of the block encoding itself, but implementing qubitization as in \cref{chebyshev_lemma} also requires reflections around $|G\rangle_a$.
Our construction so far requires at most four-qubit gates, so we would prefer to avoid the strategy of simply inverting $G$, reflecting around $|0\rangle_a$, and then reapplying $G$, since this requires a phase controlled on the entire auxiliary register.
However, in the case of spin models, we can use the fact (pointed out in the proof of \cref{chebyshev_lemma}) that we do not in fact require a complete reflection around $|G\rangle_a$, but only within the subspace containing states of the auxiliary qubits that can be reached by applying $U$ to $|G\rangle_a$.

$U$ does not change the states of the auxiliary qubits directly, since it is only controlled by them, although this does mean that it entangles them with the system qubits.
In particular, $U$ does not change the Hamming weight of the auxiliary states.
To reflect around $|G\rangle_a$, we must invert $G$, i.e., apply $G^\dagger$ to the auxiliary qubits.
This operation can increase the Hamming weight of each of the registers $|\vec{x}\rangle$ and $|\vec{z}\rangle$ to at most three, which we can see as follows.
Prior to applying $G^\dagger$, the Hamming weights of $|\vec{x}\rangle$ and $|\vec{z}\rangle$ are each at most two.
The only operation that can increase the Hamming weight higher than this is the partial CNOT in \eqref{partial_cnot} applied to qubits 0 and 1 in each register to divide amplitude between the weight-one and weight-two terms.
When the inverse of this operation is applied to a state of weight two, if qubit 0 is in state $|1\rangle$ and qubit 1 is in state $|0\rangle$, while the other $|1\rangle$ is somewhere else, then one of the terms resulting from the partial CNOT will have weight three, with qubit 1 now contributing the third $|1\rangle$.
However, the only other operation to be applied to this state that does not conserve Hamming weight is the initial $X_0$ \eqref{weight_one_seq}, which decreases the weight of the state.
Of course, this $X_0$ may increase the weight of other computational basis states in the superposition, but none of those could have weight higher than two prior to its application, so overall it is impossible to end up with states of weight higher than three.

Given this, after inverting $G$, we only need to reflect around $|0\rangle_a$, the all-zeroes state of the auxiliary qubits, in the subspace spanned by states of Hamming weight up to three (for both registers $|\vec{x}\rangle$ and $|\vec{z}\rangle$).
This allows us to avoid reflecting by applying a phase controlled on the entire register.
Instead, we can simply use two additional auxiliary qubits to count the ones in $|\vec{x}\rangle$ and another two additional  auxiliary qubits to count the ones in $|\vec{z}\rangle$.
For example, to count the ones in $|\vec{x}\rangle$, we prepare its two auxiliary qubits in the state
\begin{equation}
    |w^{(x)}_1,w^{(x)}_0\rangle=|00\rangle,
\end{equation}
where $w^{(x)}_0$ and $w^{(x)}_1$ refer to the zeroth and first bits of the weight of $|\vec{x}\rangle$, respectively.
Then for each qubit in $|\vec{x}\rangle$, we apply Toffoli on $w^{(x)}_1$ controlled on $w^{(x)}_0$ and the current qubit in $|\vec{x}\rangle$, followed by CNOT on $w^{(x)}_0$ controlled on the current qubit in $|\vec{x}\rangle$.
Since there are at most three ones in $|\vec{x}\rangle$, this stores the number of ones in $|\vec{x}\rangle$ in $|w^{(x)}_1,w^{(x)}_0\rangle$ as a binary number.
Therefore, once we have done this for both $|\vec{x}\rangle$ and $|\vec{z}\rangle$, we can simply apply a $-1$ phase controlled on the resulting four auxiliary qubits being in state $|0000\rangle$.
After that, we uncompute the auxiliary qubits used for counting, then reapply $G$, and our reflection about $|G\rangle_a$ is complete.
A circuit diagram for $R$ is given in Fig. 9.

\subsection{Analysis}

Finally, we can analyze the costs of $U$, $G$, and $R$.
For simplicity, we will do this for the case where the Hamiltonian contains only $Z$, $ZZ$, and $XX$ terms, i.e., it is a Heisenberg XYZ model with the YY interaction set to zero.
For $U$, we require two main steps:
\begin{enumerate}
    \item We argued above that the cost prior to implementing the signs of the terms is three layers of singly-controlled operations as in \eqref{U_decomp}, for a total of $3n$ CNOTs and CPHASEs.
    
    \item To implement the signs of the terms, we require at most $T$ CZs for $T$ terms (one CZ per term with a negative sign).
\end{enumerate}
Hence the total cost to implement $U$ is
\begin{equation}
\label{U_cost}
    3n+T
\end{equation}
two-qubit gates.

For $G$, we also require two main steps, as outlined above:
\begin{enumerate}
    \item For each of $|\vec{x}\rangle$ and $|\vec{z}\rangle$, prepare the weight-one part of the superposition as in \eqref{weight_one_seq}. This requires a single NOT followed by $n-1$ PSWAPs for each, so in total requires $2$ NOTs and $2(n-1)$ PSWAPs.
    
    \item For each of $|\vec{x}\rangle$ and $|\vec{z}\rangle$, prepare the weight-one part of the superposition as in \eqref{weight_two_seq}, replacing the first step (the double NOT) with the partial CNOT as in \eqref{partial_cnot}. This requires the partial CNOT followed by
    \begin{equation}
        \sum_{i=2}^{n-2}i=\frac{1}{2}n^2-\frac{3}{2}n
    \end{equation}
    C-PSWAPs, so in total requires $2$ PCNOTs and $n^2-3n$ C-PSWAPs.
\end{enumerate}
We obtain the total cost for $G$ by adding up the above gate counts, but first we note that the three qubit gates, which are all C-PSWAPS, can be implemented in place via four two-qubit gates each (this construction is inspired by the contruction of controlled time-evolutions for certain Hamiltonians in~\cite{dong2022groundstate}).
For
\begin{equation}
    \hat\mu_{jk}\coloneqq
    \begin{pmatrix}
        0&0&0&0\\
        0&0&i&0\\
        0&-i&0&0\\
        0&0&0&0
    \end{pmatrix},
\end{equation}
where $j,k$ denote the qubits acted upon, we can express a partial swap as
\begin{equation}
    \text{PSWAP}_{jk}(\theta)=e^{i\theta\hat\mu_{jk}}.
\end{equation}
We also have
\begin{equation}
    \hat\mu_{jk}Z_j=-Z_j\hat\mu_{jk},
\end{equation}
so
\begin{equation}
    \text{C$_j$-PSWAP$_{kl}$}(\theta)=e^{i\theta\hat\mu_{kl}/2}\cdot C_jZ_k\cdot e^{-i\theta\hat\mu_{kl}/2}\cdot C_jZ_k,
\end{equation}
a product of four two-qubit gates, as desired.
Using this, the total cost for $G$ comes to
\begin{equation}
\label{G_cost}
\begin{split}
    &4n^2-10n\quad\text{two-qubit},\\
    &2\quad\text{single-qubit}
\end{split}
\end{equation}
gates (for $n\ge3$).

Finally, implementing $R$ requires an application of $G^\dagger$ and an application of $G$, with the reflection about $|0\rangle_a$ in between.
The cost of the latter is $4n$ Toffoli gates and $4n$ CNOTs (to compute and uncompute the weight of the state), plus the four-qubit CPHASE.
As in the implementation of $G$, we can replace each three qubit gate (Toffolis in this case) with four two-qubit gates, this time using
\begin{equation}
    \hat\nu_{jk}\coloneqq
    \begin{pmatrix}
        0&0&0&0\\
        0&0&0&0\\
        0&0&0&1\\
        0&0&1&0
    \end{pmatrix}
\end{equation}
to generate a CNOT as
\begin{equation}
    \text{CNOT}_{jk}=e^{i\pi\hat\nu_{jk}/2}S_j^\dagger,
\end{equation}
where
\begin{equation}
    S_j\coloneqq\begin{pmatrix}1&0\\0&i\end{pmatrix}
\end{equation}
acts on the controlling qubit.
As above, we also have
\begin{equation}
    \hat\nu_{jk}Z_j=-Z_j\hat\nu_{jk},
\end{equation}
Therefore,
\begin{equation}
\label{toffoli_decomp}
    \text{Toffoli}_{jkl}=e^{i\pi\hat\nu_{kl}/4}\cdot C_jZ_k\cdot e^{-i\pi\hat\nu_{kl}/4}\cdot C_jZ_k\cdot C_jS_k^\dagger,
\end{equation}
and the $C_jS_k^\dagger$ can be absorbed into the second $C_jZ_k$, yielding a sequence of four two-qubit gates, as desired.
The four-qubit CPHASE can be implemented in place via fourteen two-qubit gates as follows.
Note that for
\begin{equation}
    \hat\omega_{jk}\coloneqq
    \begin{pmatrix}
        0&0&0&0\\
        0&0&0&0\\
        0&0&1&0\\
        0&0&0&-1
    \end{pmatrix},
\end{equation}
we have
\begin{equation}
    C_jZ_k=e^{i\pi\hat\omega_{jk}/2}S_j^\dagger,
\end{equation}
and
\begin{equation}
    \hat\omega_{jk}X_j=-X_j\hat\omega_{jk}.
\end{equation}
Hence,
\begin{equation}
\begin{split}
    &CCCZ_{jklm}\\
    &=e^{i\pi\hat\omega_{lm}/4}\cdot\text{Toffoli}_{jkl}\cdot e^{-i\pi\hat\omega_{lm}/4}\cdot\text{Toffoli}_{jkl}\cdot CCS_{jkl}^\dagger,
\end{split}
\end{equation}
where the Toffolis can be implemented via \eqref{toffoli_decomp} and the $CCS^\dagger$ can be implemented similarly, so that the total cost for the four-qubit CPHASE is fourteen two-qubit gates.
Hence the total cost of $R$ is twice the cost of $G$, plus $4n$ Toffoli gates and $4n$ CNOTs, plus the four-qubit CPHASE, which comes to a total of
\begin{equation}
\label{R_cost}
\begin{split}
    &8n^2+14\quad\text{two-qubit},\\
    &4\quad\text{single-qubit}
\end{split}
\end{equation}
gates.
Implementing $R$ also requires six additional auxiliary qubits in total.

\clearpage

\centering
\includegraphics[width=\textwidth]{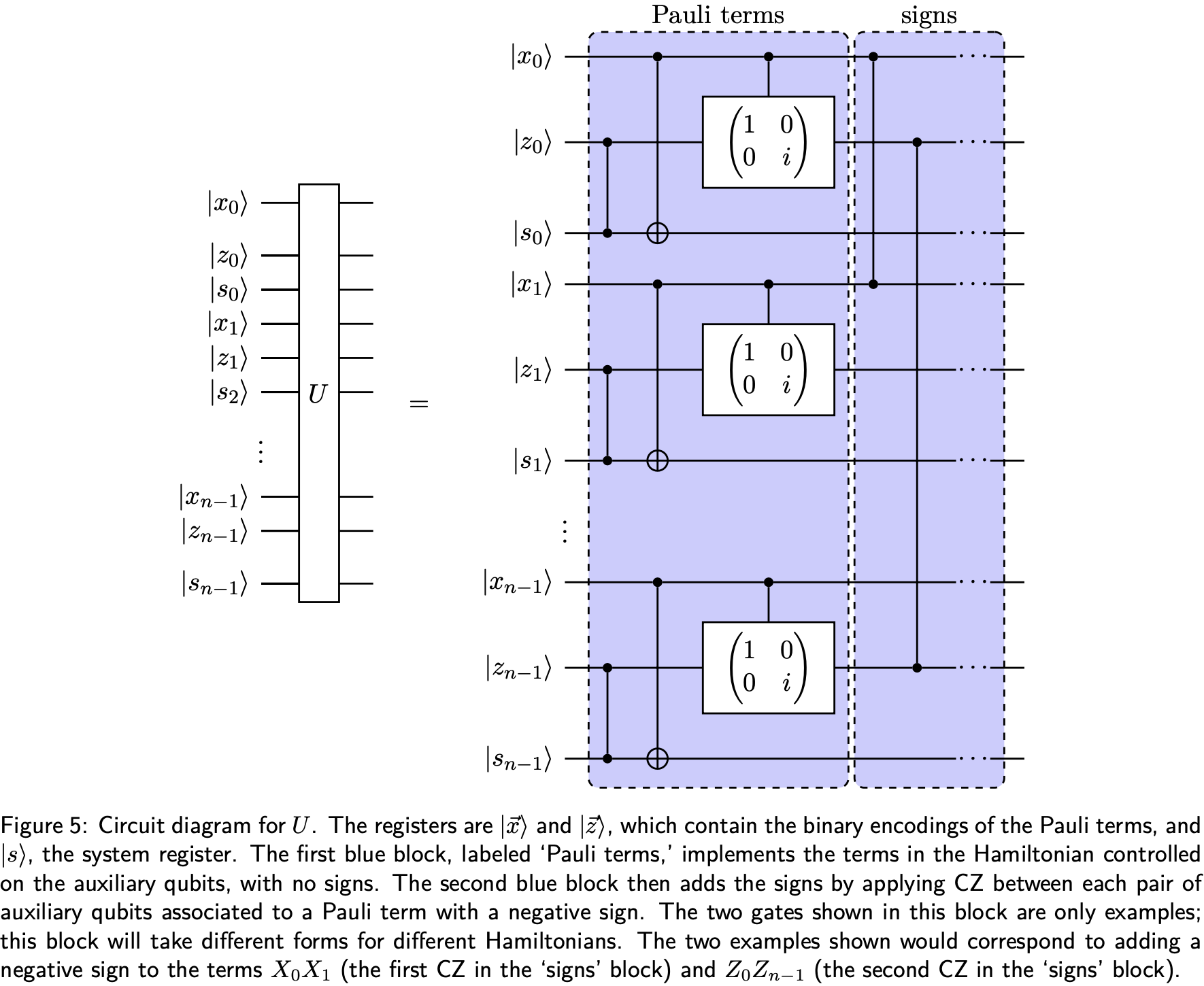}

\clearpage

\centering
\includegraphics[width=\textwidth]{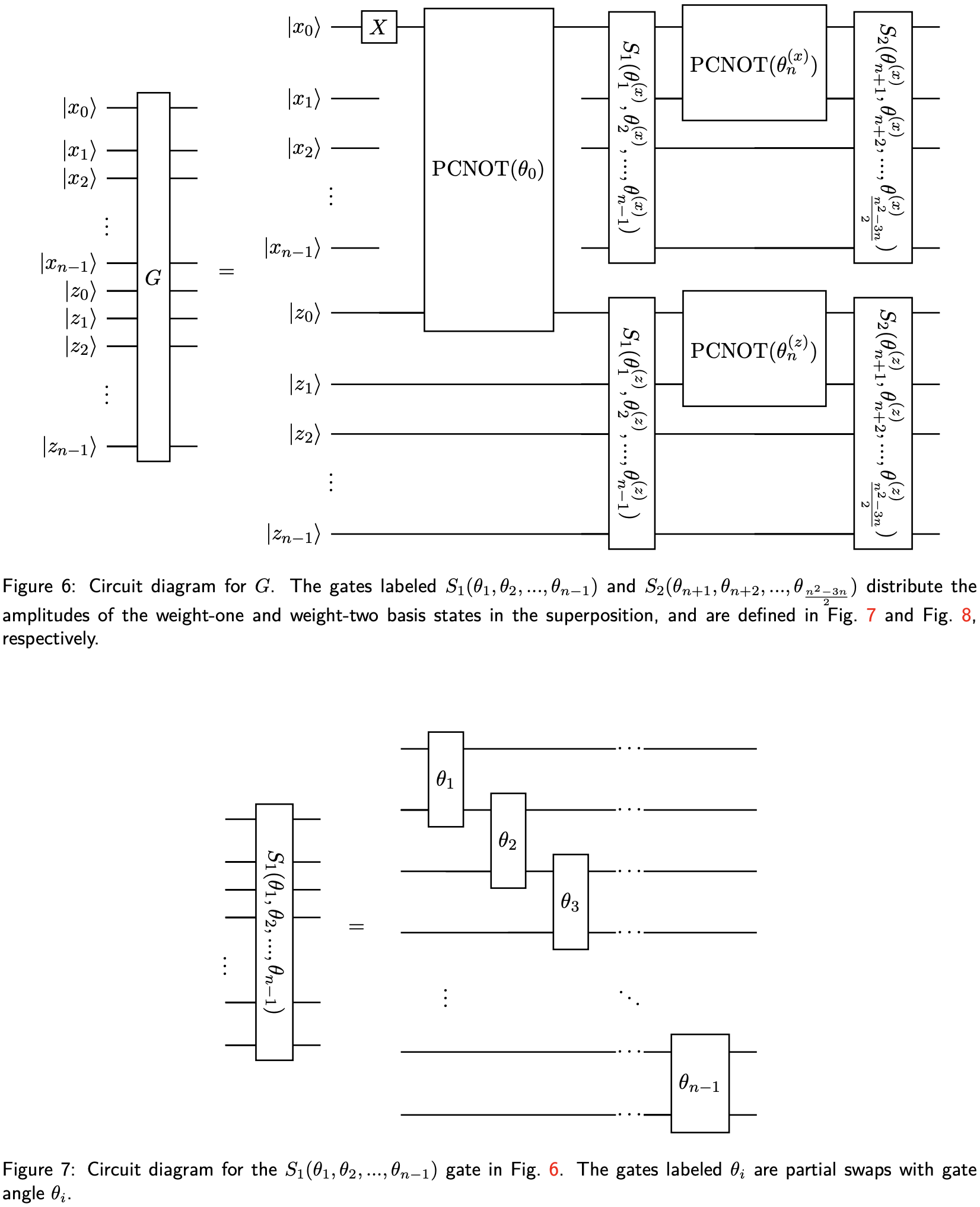}

\clearpage

\centering
\includegraphics[width=\textwidth]{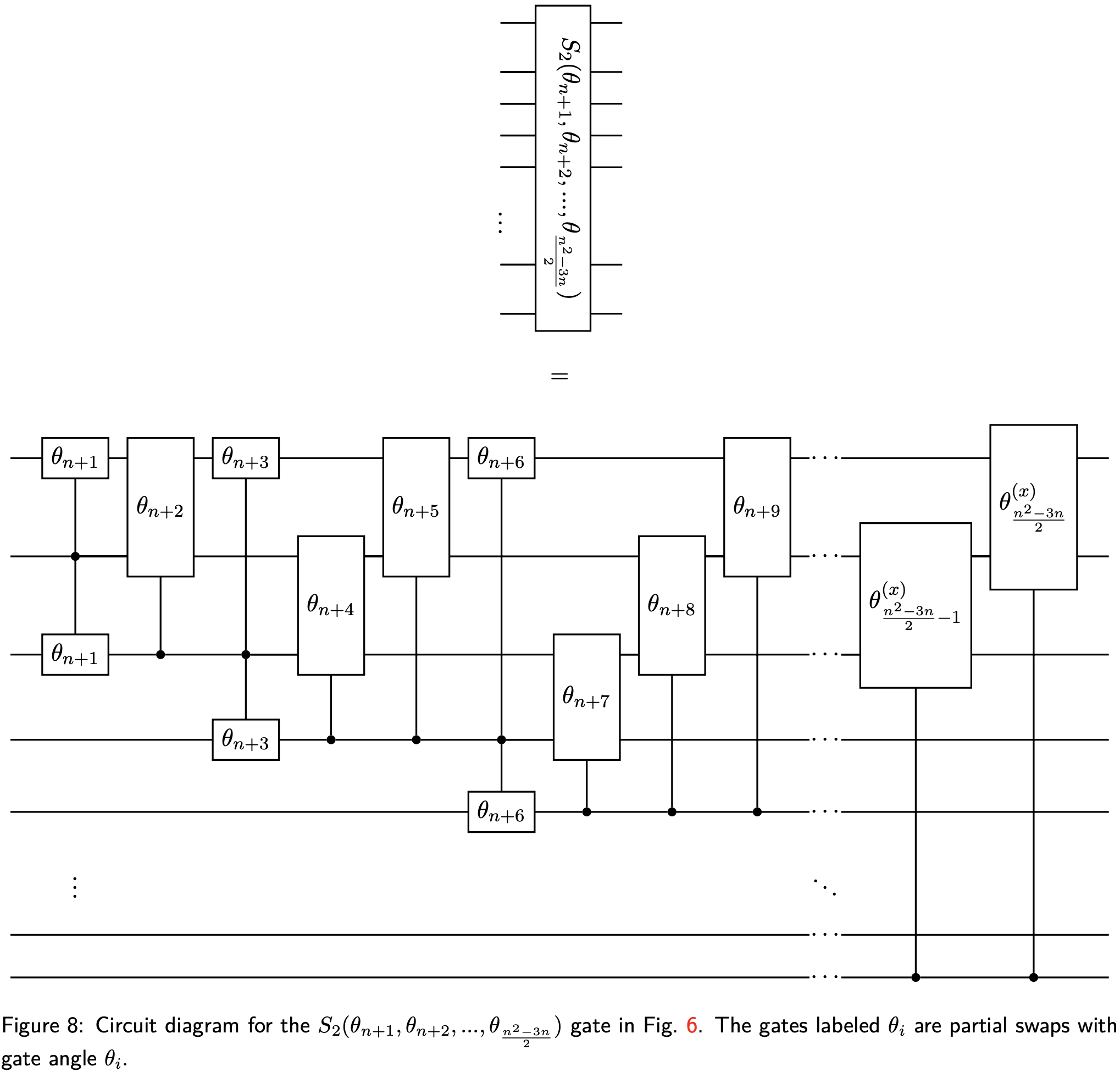}

\clearpage

\centering
\includegraphics[width=\textwidth]{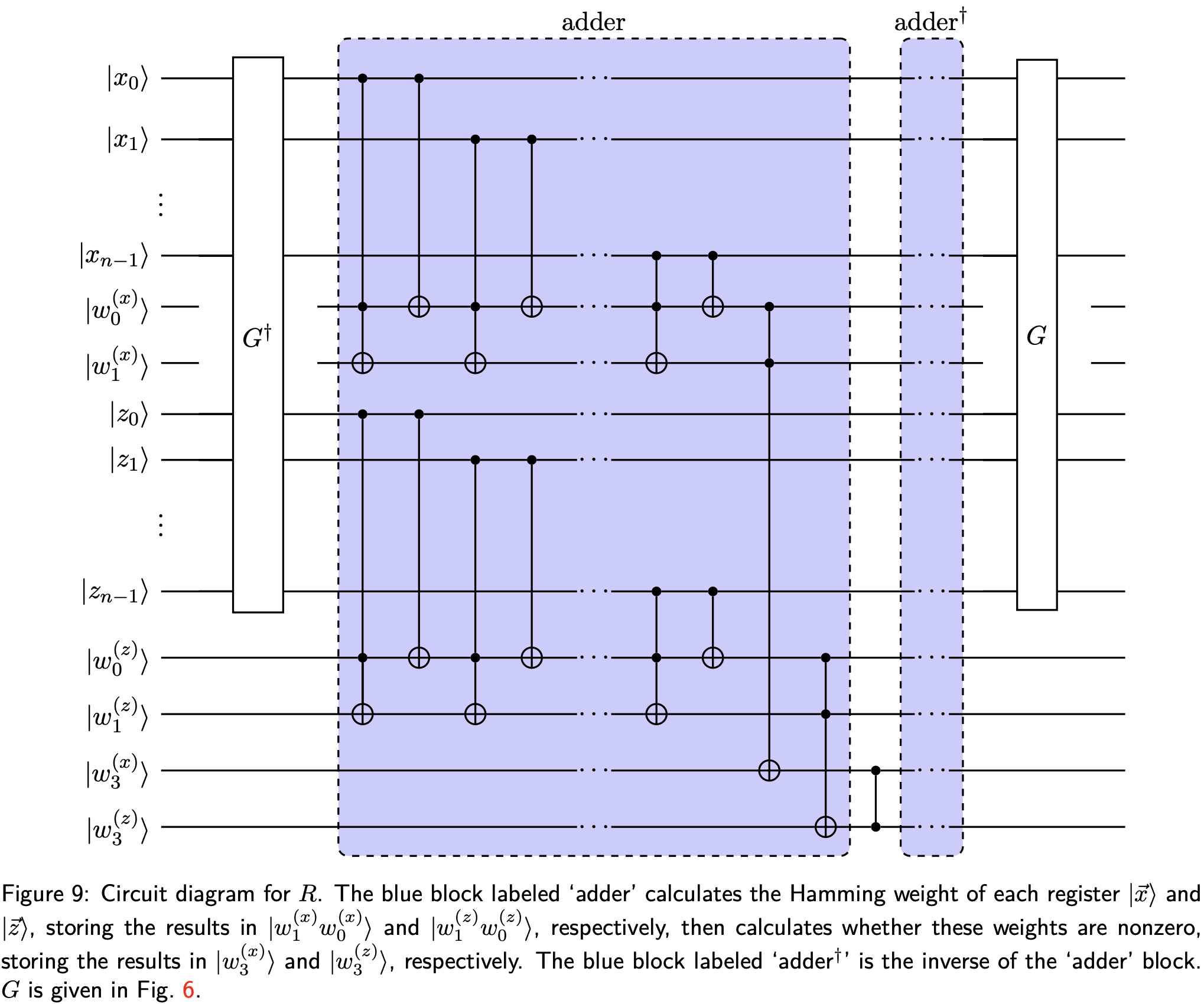}

\end{document}